\documentclass[final,3p]{elsarticle}
\usepackage{amssymb,amsthm,amsmath}
\usepackage{yhmath}
\usepackage{lineno}
\usepackage[colorlinks=true,breaklinks=true,pdftex]{hyperref}
\usepackage[english]{babel}
\usepackage[utf8]{inputenc}
\usepackage{indentfirst}
\usepackage{cases}
\usepackage{color}
\usepackage{graphicx}
\usepackage{caption,subcaption}
\usepackage{fancyhdr}
\usepackage{mathtools}
\usepackage{rotating}
\usepackage{listings}
\usepackage[toc,page]{appendix}
\usepackage{pdfpages}
\usepackage{verbatim}
\usepackage{mathrsfs}
\usepackage{enumitem}
\usepackage[dvipsnames]{xcolor}
\usepackage{geometry}
\usepackage{tikz}
\usepackage{float}
\usepackage{amsbsy}
\usepackage{float} 
\usepackage{cleveref}
\usepackage{lscape}
\usepackage{multirow}
\usepackage{stmaryrd}
\usepackage{natbib}
\usepackage{graphicx}
\usepackage{bm}
\usepackage{extarrows}
\usepackage[percent]{overpic}

\usetikzlibrary{positioning}
\usetikzlibrary{calc}

 \theoremstyle{plain}
 \newtheorem{theorem}{Theorem}[section]
 \newtheorem{lemma}[theorem]{Lemma}
 \newtheorem{proposition}[theorem]{Proposition}
 \newtheorem{corollary}[theorem]{Corollary}

 \theoremstyle{definition}
 \newtheorem{definition}[]{Definition}
 \newtheorem{example}{Example}

\newcommand{\mbf}[1]{\bm{#1}}

\newcommand{\ovl}[1]{\overline{#1}}

\newcommand{\N}{\mathbb N}

\newcommand{\C}{\mathbb C}

\newcommand{\R}{\mathbb R}

\renewcommand{\mbf}{\mathbf}
\renewcommand{\P}{\mathbb{P}}
\newcommand{\bs}{\boldsymbol}

\newcommand{\bP}{\bm P}

\newcommand{\bQ}{\bm Q}
\newcommand{\bC}{\bm C}

\newcommand{\cN}{\mathcal{N}}
\newcommand{\cbU}{\boldsymbol{\mathcal{U}}}
\newcommand{\cU}{\mathcal{U}}

\tikzset{minimum size=2em} 

\modulolinenumbers[1]

\let\today\relax
\makeatletter
\def\ps@pprintTitle{%
	\let\@oddhead\@empty
	\let\@evenhead\@empty
	\def\@oddfoot{}%
	\let\@evenfoot\@oddfoot}
\makeatother

\newcommand\mm[1]{{\color{green}{#1}}}
\newcommand{\pablito}{\textcolor{blue}}

\journal{Elsevier}

\theoremstyle{definition}
\newtheorem*{remark}{Remark}

\begin{document}
	
\begin{frontmatter}

\title{Explicit Inversion of Planar NURBS Curves}

\author[1]{Michelangelo Marsala\corref{cor1}}
\ead{michelangelo.marsala@unifi.it}
\author[2]{Pablo Maz\'on}

\cortext[cor1]{Corresponding author}
\address[1]{Department of Mathematics and Informatics ``U. Dini'', University of Florence, Florence, Italy}
\address[2]{Department of Mathematics, CUNEF University, Madrid, Spain}
\date{\today}

	\begin{abstract}
    We prove that a general planar NURBS curve parametrization $\phi: [u_0,u_m] \xrightarrow{} C \subset \R^2$ admits an inverse map $\phi^{-1}: C \xrightarrow{} [u_0,u_m]$ defined by rational splines. 
    More specifically, we construct a family of rational spline functions on the curve $C$, present explicit formulas for their computation, and prove that the inverse parametrization admits a representation as a linear combination of these functions. 
    Several examples are provided to illustrate the effectiveness of the proposed approach.
	\end{abstract}
	\begin{keyword}
    NURBS, 
    plane curve, 
    inverse parametrization, 
    birational map
    \end{keyword}

    \end{frontmatter}

    \section{Introduction}
    
Non-Uniform Rational B-Splines (NURBS) are among the most widely used representations for curves and surfaces in computational geometry, providing a flexible and mathematically rigorous framework for geometric modeling. Due to their versatility, NURBS have become a cornerstone in Computer-Aided Geometric Design (CAGD), Computer-Aided Design and Manufacturing (CAD/CAM), and other industrial applications. 
They allow designers to model complex geometries with high precision while maintaining smoothness and continuity properties essential for manufacturing and engineering analysis. 
Furthermore, NURBS are increasingly employed in simulation frameworks, particularly in isogeometric analysis, where the same basis functions are used for both geometry representation and the numerical solution of partial differential equations, bridging the traditional gap between design and analysis. 

 By construction, the computation of NURBS objects rely on fast and efficient algorithms for B-spline function evaluation, of which one of the most used is Cox-De Boor formula \cite{deboor,Piegl1997}. 
Other algorithms that directly evaluate B-splines can be found in \cite{Beccari2022,Cohen1980}, while others exploit their piecewise rational nature, as in \cite{Chudy2023,Romani2004} and references therein. Such NURBS constructions find application in many fields, such as topology optimization problems \cite{Gao2023,Giele2025,Zheng2020}, point cloud fitting \cite{Barazzetti2015,Chouychai2015,Dimitrov2014}, geometric design \cite{Bichet2025,Lin2007,Marsala2024} and isogeometric analysis \cite{Bracco2023,Bracco2024,Collin2016,Farahat2024,Marsala2024_iga,Takacs2023,Toshniwal2017,Weinmller2022}.

If the parametric domain of the problem is not $[0,1]^n$, as most often in practice and where multipatch constructions are needed, the evaluation of basis functions or geometry is performed involving the pullback of the NURBS map that defines a patch.  
Such evaluation of the inverse map is usually numerical, as a root finding problem that involves Newton-like methods as in \cite{Krishnamurthy2008,Ma2003,Selimovic2006}.
Since such objects are widely used in many applied fields, an explicit symbolic formula for the inverse map of a NURBS curve or a tensor-product surface or volume would significantly reduce the cost of computing preimages numerically, as the problem would then reduce to simple function evaluation.

Surprisingly, although spline parametrizations are piecewise rational, algebraic properties such as the existence of an inverse spline are largely absent in current NURBS technologies. 
The bijectivity of a parametrization is a crucial requirement for most applications \cite{volumetric_spline_parametrization} and inherently ensures the existence of an inverse map. 
However, this inverse is generally not defined by spline functions, making its practical derivation challenging due to the complex relationships between points and their preimages \cite{birational_quadrilateral}. 
Much research has focused on developing sufficient criteria for local injectivity, aiming to determine conditions under which a rational parametrization is injective within a specific domain \cite{locally_injective}, but this approach falls short when it comes to efficiently computing preimages. 

To date, only the invertibility of rational maps, known as birational, has been studied, with no attention given to spline parametrizations. 
Birational geometry, a classical field in algebraic geometry, only saw practical use in CAD design starting in 2015 \cite{birational_quadrilateral}. 
In dimension two, current approaches are limited to birational low-degree bivariate tensor-product parametrizations \cite{birational_quadrilateral,birational_1xn} and some quadratic cases \cite{birational_quadratic}. 
Regarding birational volumetric parametrizations, the situation is more challenging and only recently constructive results for birational volumes with sufficient flexibility for CAD have been proposed \cite{birational_trilinear_tensor}. 

\section*{Contributions}

   In this work, we study the invertibility of planar NURBS curve parametrizations. 
    We prove that such parametrizations are generically invertible for every degree, and derive explicit formulas for an inverse parametrization relying on rational spline functions defined over the curve. 
    Our aim is that our formulas are useful in applications while addressing the task of computing computing preimages. 
    Specifically, given a NURBS parametrization $\phi:[u_0,u_m] \xrightarrow{} C\subset \R^2$ of a plane curve $C$ the objectives of this paper are:  
\begin{enumerate}
    \item To define a family of rational spline functions over the curve $C\subset \R^2$, that we call physical rational splines. 
    \item To prove that, if $\phi$ is general (\Cref{eq: general}), there exists an inverse map $\phi^{-1}:C \xrightarrow{} [u_0,u_m]$ defined by physical rational splines, i$.$e$.$ the inverse is also a rational spline parametrization. 
    \item To provide an explicit formula for $\phi^{-1}$ as a linear combination of physical rational splines.  
    
\end{enumerate}

\noindent Such construction is a first step towards the definition of explicit inverse map of tensor-product NURBS objects, that would save many computational cost in application such as isogeometric analysis simulations and point cloud fitting problems.

 \section*{Outline}
 The paper is divided into five sections. In \Cref{preliminaries} we recall basic definitions and properties of splines and NURBS. \Cref{loc_inv_section} introduces the main tools we use to define the inverse parametrization of a planar NURBS curve, that is used in \Cref{phys_rat_splines} to define rational spline functions represented in physical coordinates. Such physical rational splines are hence used in \Cref{sec: inverse B-spline} to give an alternative definition of the inverse NURBS parametrization. \Cref{sec_examples} concludes the paper with various numerical examples illustrating the potential of the proposed construction.

\section{Preliminaries}\label{preliminaries}

In this paper, a \textit{curve} is the image of a non-constant  continuous map $\gamma: [u_0,u_m] \xrightarrow{} \R^N$, where $u_m > u_0$. 
The following is a specialization of the general definition of rational spline functions to curves (see \cite{the_algebra_of_splines}).

\begin{definition}
    \label{def: spline on curve}
    Let $C\subset \R^N$ be a curve. 
    Moreover, let $U = (u_0,u_1,\ldots,u_m)$ be a sequence of $m+1$ points in $C$ and let $S = (s_0,\ldots,s_m)\in\N^{m+1}$. 
    A \textit{rational spline of smoothness $S$ at $U$} is a function $f:C\xrightarrow{} \R$ satisfying the following two conditions: 
    \begin{enumerate}
        \item[1)] The restriction of $f$ to each connected component of $C \, \backslash \, \{ u_0,\ldots,u_m \}$ is a rational function. 
        \item[2)] $f$ is smooth at $u_k$ up to order $s_k$, for each $0\leq k\leq m$.  
    \end{enumerate}
    If for every connected component in $1)$ the spline $f$ restricts to a rational function where the degree of both numerator and denominator is $\leq d$, we say that the \textit{degree of $f$} is bounded by $d$.  
\end{definition}

\noindent 
If $N=1$ the situation is classical since curves are closed intervals $[u_0,u_1]\subset \R$. 
In this case, spline functions are best understood through B-splines, which we recall briefly. 
Given $m\in \N$ let 
\begin{equation*}
U= \{ u_0, u_1,\dots, u_{m} \}
\end{equation*}
be a sequence of $m+1$ nondecreasing real numbers. We refer to it as \textit{knot vector} and to its elements as \textit{knots}. 
For each $0\leq k\leq m -1$, we call $I_k=[u_k,u_{k+1})$ a \textit{knot interval}. 
Given $m\geq d + 1$, we can define the B-splines of degree $d$ with knot vector $U$. The $k$-th B-spline of degree $d$ is defined as 
\begin{equation}
\label{bspline_def}
N_{k,d}(u)=\frac{u-u_k}{u_{k+d}-u_k} N_{k,d-1}(u) + \frac{u_{k+d+1}-u}{u_{k+d+1}-u_{k+1}} N_{k+1,d-1}(u)
\end{equation}
where
\begin{equation*}
N_{k,0}(u)= 
\begin{cases*}
1 \quad &\text{if } $u \in I_k$ \\
0 \quad &\text{otherwise}
\end{cases*} 
\end{equation*}
It is straightforward from the definition that $N_{k,d}(u)$ restricts to a polynomial over any knot interval $I_k$. 
The recursive formula \eqref{bspline_def} is known as  \textit{Cox-de Boor algorithm} or \textit{Cox-de Boor formula}. 
For identical consecutive knots, namely $u_k = u_{k+1} = \ldots = u_{k+s}$ for some $s>0$, some of the denominators in \eqref{bspline_def} vanish. 
In these cases, we adopt the usual convention and set $0/0 = 0$. Lastly, the multiplicity of a knot $u_k$ is the number of its occurrences in $U$, denoted by $\mu(u_k)$. 

B-splines enjoy properties such as nonnegativity, local support, and they form a  partition of unitity. Remarkably, B-splines form a basis for the vector space of spline functions of degree $\leq d$ on $[u_0, u_m]$ with smoothness $m_k$ at each knot $u_k$, which motivates their name \textit{Basis splines}. For an exhaustive description of these and other properties, we refer the reader to \cite{deboor,Piegl1997,Prautzsch2002}. 

A NURBS (Non-Uniform Rational B-Spline) curve $C$ of degree $d$ is defined by 
\begin{align}
\label{nurbs_curve}
    \phi: [u_0,u_m] & \longrightarrow C \subset \R^N
    \\[2pt] \nonumber
    &u \longmapsto \dfrac{\sum_{i=0}^n w_i \mbf{P}_i N_{i,d}(u)}{\sum_{i=0}^n w_i N_{i,d}(u)} 
    = 
    \left(
    \dfrac{f_1(u)}{f_0(u)}
    ,\ldots,
    \dfrac{f_n(u)}{f_0(u)}
    \right)
\end{align}
where $\{\bm{P}_i\}_{i=0}^n$ are the control points (forming a \textit{control polygon}), the $\{w_i\}_{i=0}^n$ are the \textit{weights}, and where $f_0,\ldots, f_N$ are the defining polynomials of the map. The number of control points, knots and the degree of the curve are related by $ n = m - d - 1$.
Without loss of generality, throughout the paper we work with knot vectors of the form 
\begin{equation*}\label{knot_vector_endpoints}
U=\{\underbrace{0,\dots,0}_{d+1},u_{d+1},\dots,u_{m-d-1},\underbrace{1,\dots, 1}_{d+1} \}
\end{equation*}
where the multiplicity of inner knots is possible greater than one. 
This condition ensures that $C$ interpolates the endpoints of the control polygon, i.e. $\phi(0)=\mbf{P}_0$ and $\phi(1)=\mbf{P}_n$. 
Further properties of such objects are presented, e.g., in~\cite{Piegl1997,Prautzsch2002}. 
Moreover, we define the \textit{reduced knot vector} 
\begin{equation*}
U' 
= 
\{ 
u_0',
\ldots,
u_{m'}'
\}
\end{equation*}
as the nondecreasing sequence with the  knots from $U$ but all having  multiplicity exactly one.

Throughout the paper we will only deal with planar curves.
For $N=2$, the control points take the form $\mbf{P}_{i} = (x_{i},y_{i})$, and we can write the defining polynomials of $\phi$ as 
\begin{equation*}
\Phi(u) 
= 
(f_0(u),f_1(u),f_2(u))
= 
\sum_{i=0}^n
w_i\ovl{\mbf{P}}_i N_{i,d}(u)
\end{equation*}
where we set $\ovl{\mbf{P}}_i = (1,x_i,y_i)$ for each $0\leq i\leq n$. 
Clearly, from $\Phi(u)$ we recover the parametrization $\phi$. 

\section{Computation of the Local Inverses}\label{loc_inv_section}
        By definition, the curve $C$ parametrized by \eqref{nurbs_curve} is piecewise rational. 
        In the sequel, we will assume the following genericity condition. 
        \begin{definition}
        \label{eq: general}
        Let $\phi: [u_0,u_m] \xrightarrow{} C\subset \R^2$ be as \eqref{nurbs_curve} with B-splines of degree $d$. 
        We say that $\phi$ is \textit{general} if no rational segment of $C$ can be parametrized with polynomials of degree $<d$. 
        \end{definition}
        For any $d \geq 1$, $\phi$ is general unless there is a special algebraic relation among the control points. 

        \begin{example}
        If $d=2$, the parametrization $\phi$ is general if and only if no three consecutive control points are collinear. 
        Namely, for any consecutive control points $\mbf{P}_j$, $\mbf{P}_{j+1}$, $\mbf{P}_{j+2}$ we can always find a knot interval $I_k = [u_k,u_{k+1}]$ where the only nonzero B-splines are precisely $N_{j,2}(u)$, $N_{j+1,2}(u)$, $N_{j+2,2}(u)$. 
        In particular, the rational piece $\phi(I_k)$ is a segment if and only if the three control points are collinear. 
        \end{example}

    \subsection{Conversion from NURBS to B\'ezier Representation} 
    \label{subsection: conversion}
    In order to compute explicitly the inverse NURBS parametrization, we need
    to extract rational representations of these functions on each $I_k$. 
    In this subsection, we rely on the work \cite{Yan2024} to perform the conversion of a NURBS curve into its local Bézier representation on $I_k$.

    \vskip4pt
    
    For any $k=0,\ldots, m-1$ there are at most $d+1$ B-spline functions $N_{j,d}(u)$ that are nonzero on the knot interval $I_k$, namely $N_{k-d,d}(u),\ldots,N_{k,d}(u)$. 
    Moreover, the restriction $F_{j,d}(u) = N_{k-d+j,d}(u)|_{I_k}$ is a polynomial of degree $d$ for any $j = 0, \ldots, d$. 
    Therefore, there is a $(d+1)\times (d+1)$ change-of-basis  matrix $S_{k,d}$ satisfying 
    \begin{equation*}
        \begin{pmatrix}
            F_{0,d}(t)
            & 
            \ldots 
            & 
            F_{d,d}(t)
        \end{pmatrix}
        = 
        \begin{pmatrix}
            B_{0,d}(t)
            & 
            \ldots 
            & 
            B_{d,d}(t)
        \end{pmatrix}
        \cdot 
        S_{k,d}
    \end{equation*}
    where we are using the change of variable $\alpha: I_k \xrightarrow{} [0,1)$ given by 
    \begin{equation}
    \label{eq: linear change of coordinates}    
    t = \alpha(u) = \frac{u-u_k}{u_{k+1}-u_k} \in [0,1)
    \end{equation}
    and $B_{j,d}(t) = \binom{d}{j} (1-t)^{d-j} u^{j}$ is the $i$-th Bernstein polynomial of degree $d$. 
    In \cite{Yan2024}, symbolic formulas and algorithms for the computation of the change-of-basis matrix $S_{k,d}$ are provided. 
    In particular, we have 
    \begin{equation}
    \label{eq: conversion}
        \Phi(t)|_{[0,1)} 
        =
        \begin{pmatrix}
            B_{0,d}(t)
            & 
            \ldots 
            & 
            B_{d,d}(t)
        \end{pmatrix}
        \cdot 
        S_{k,d}
        \cdot 
        \begin{pmatrix}
            w_{i-d} \ovl{\mbf{P}}_{i-d} \\
            \vdots \\
            w_i \ovl{\mbf{P}}_i
        \end{pmatrix} 
        = 
        (
        f_0^{(k)}(t)
        , 
        f_1^{(k)}(t)
        , 
        f_2^{(k)}(t)
        )
    \end{equation}
    for some polynomials 
    $
    f_0^{(k)}
    , 
    f_1^{(k)}
    , 
    f_2^{(k)}
    $
    of degree $\leq d$ in $\R[t]$. Here, the superindex $k$ refers to the knot interval $I_k$.
    
    \subsection{Local Rational Inverses}
    \label{loc_inv_sec}
    We define a \textit{physical knot interval} as $C_k := \phi(I_k)$. 
Here we rely on the local B\'ezier representation in \eqref{eq: conversion}.
Our approach for the computation of  $\phi^{-1}|_{C_k}$ follows \cite[Section 2]{buse_dandrea}. 
First, after the change of coordinates \eqref{eq: linear change of coordinates} the restriction $\phi_k = \phi|_{I_k}:I_k \xrightarrow{} C_k$ becomes

\begin{align}
\label{nurbs_curve local}
    \psi_k:=\phi_k \circ \alpha^{-1} : [0,1) &\longrightarrow C_k\subset \R^2 \\[2pt] \nonumber
    t &\longmapsto  \left(\dfrac{f_1^{(k)}(t)}{f_0^{(k)}(t)},\dfrac{f_2^{(k)}(t)}{f_0^{(k)}(t)} \right)
\end{align}

Next, we recall the definition of the Sylvester matrix of two univariate polynomials.
\begin{definition}
Let 
$
f(t) = a_m t^m + a_{m-1} t^{m-1} + \dots + a_0
$ 
and 
$
g(t) = b_n t^n + b_{n-1} t^{n-1} + \dots + b_0
$ 
two univariate polynomials in $R[t]$ for some ring $R$.  
The \textit{Sylvester matrix} 
of $f$ and $g$ is the matrix 
\begin{equation*}
\text{Sylv}(f,g) =
\begin{pmatrix}
a_0 & 0 & \cdots & 0 & b_0 & 0 & \cdots & 0 \\
a_1 & a_0 & \cdots & 0 & b_1 & b_0 & \cdots & 0 \\
\vdots & \vdots & \ddots & \vdots & \vdots & \vdots & \ddots & \vdots \\
a_m & a_{m-1} & \cdots & a_0 & b_n & b_{n-1} & \cdots & b_0 \\
0 & a_m & \cdots & a_1 & 0 & b_n & \cdots & b_1 \\
\vdots & \vdots & \ddots & \vdots & \vdots & \vdots & \ddots & \vdots \\
0 & 0 & \cdots & a_m & 0 & 0 & \cdots & b_n
\end{pmatrix}
\in R^{(m+n)\times (m + n)}
\end{equation*}
\end{definition}

Now, for every $k$, we define the polynomials 
$$X(t) = f_1^{(k)}(t) - x f_0^{(k)}(t), \quad Y(t) = f_2^{(k)}(t) - y f_0^{(k)}(t) \in \R[x,y][t]$$
and let $\text{Sylv}(X(t),Y(t))$ be the Sylvester matrix of $X(t)$ and $Y(t)$ with respect to the variable $t$. 
Namely, $\text{Sylv}(X(t),Y(t))$ is a $2d\times 2d$ matrix where the first $d$ columns have entries in $\R[x]$ and the last $d$ columns have entries in $\R[y]$. 
Observe that 
$$
\text{rank } \text{Sylv}(X(t),Y(t)) < 2d \quad \iff \quad (x,y) = \psi_k(t)
$$
for some $t\in \C$. 
Moreover, we have the identity 
$$
\begin{pmatrix}
    t^{2d-1} &
    \ldots &
    t & 
    1
\end{pmatrix}
\cdot 
\text{Sylv}(X(t),Y(t))
= 
\begin{pmatrix}
    t^{d-1} X(t) &
    \ldots  &
    t X(t) &
    X(t) &
    t^{d-1} Y(t) &
    \ldots  &
    t Y(t) &
    Y(t)
\end{pmatrix}
$$
Hence, the vector $(t^{2d-1}, \ldots , t , 1)$ lies in the (left) kernel of $\text{Sylv}(X(t),Y(t))$ if and only if $X(t) = Y(t) = 0$, or equivalently, if $\psi_k(t) = (x,y)$. 
Finally, let $M_i$ be the signed $(2d-1)\times (2d-1)$ minor of $\text{Sylv}(X(t),Y(t))$ obtained by erasing the $i$-th row and any fixed column. 
Then, by \cite[Proposition $2.1$]{buse_dandrea} given any $(x,y)\in C_k$ its preimage $t = \phi^{-1}(x,y)$ can be computed as $t = M_i / M_{i+1}$, independently of the choice of $1\leq i\leq 2d$.
In particular, the inverse parametrization $\psi_k^{-1}$ is locally defined on $C_k$ as the rational map 
\begin{align}
\label{nurbs_curve_local}
    \psi^{-1}_k : C_k &\longrightarrow [0,1) \\[2pt] \nonumber
    (x,y) &\longmapsto  \frac{M_i(x,y)}{M_{i+1}(x,y)}
\end{align}

The following result asserts that $\phi$ admits a global piecewise rational inverse $\phi^{-1}$. 

\begin{theorem}
\label{theorem: piecewise NURBS inverse}
If $\phi$ is general, it admits the inverse $\phi^{-1} : C \xrightarrow{} [u_0,u_m]$ given by 

\begin{align}
\label{inv_with_locals}
    (x,y) &\longmapsto  \phi^{-1}(x,y)=
        \begin{cases*}
        (u_{1}-u_0) \, \psi^{-1}_0(x,y) + u_0 \quad &\text{if } $(x,y)\in C_0$ \\
        \hspace{2cm}\vdots \\
        (u_{k+1}-u_k) \, \psi^{-1}_k(x,y) + u_k \quad &\text{if } $(x,y)\in C_k$ \\
        \hspace{2cm}\vdots \\
        (u_m-u_{m-1}) \, \psi^{-1}_{m-1}(x,y) + u_{m-1} \quad &\text{if } $(x,y)\in C_{m-1}$ \\
    \end{cases*}
\end{align}
\end{theorem}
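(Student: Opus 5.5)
The plan is to reduce the global statement to the local inverses $\psi_k^{-1}$ of \eqref{nurbs_curve_local}. Three things need to be checked: each $\psi_k^{-1}$ is well defined on $C_k$; composing with the inverse of the affine change of variable \eqref{eq: linear change of coordinates} gives $\phi_k^{-1}$ on $C_k$; and the pieces glue into a single map on $C=\bigcup_k C_k$.

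First, fix a knot interval $I_k$ with $u_k<u_{k+1}$. Intervals with $u_k=u_{k+1}$ are empty, have $C_k=\emptyset$, and are omitted. By \Cref{eq: general}, the rational piece $\psi_k$ of \eqref{nurbs_curve local} cannot be reparametrized with polynomials of degree $<d$. I will use this to show that $\psi_k$ is birational onto its image and has degree exactly $d$.
\begin{itemize}
\item Suppose $f_0^{(k)},f_1^{(k)},f_2^{(k)}$ had a common factor. Cancelling it would give a representation of degree $<d$.
\item Suppose $\psi_k$ had degree $r>1$ as a map onto its image. By L\"uroth's theorem, $\psi_k=\tilde\psi\circ h$ with $\deg h=r$, so $\tilde\psi$ would be a parametrization of degree $d/r<d$.
\end{itemize}
Both are excluded, so $X(t)$ and $Y(t)$ are generically of degree $d$ in $t$ and share, for a generic point of $C_k$, exactly one common root $t$. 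The Sylvester matrix therefore has corank one there, and its left kernel is spanned by $(t^{2d-1},\ldots,1)$. By \cite[Proposition 2.1]{buse_dandrea}, the signed maximal minors $M_i$ are proportional to the entries of this kernel vector, so $M_i/M_{i+1}=t$.

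Next, the identity $\psi_k^{-1}(\psi_k(t))=t$ holds on a Zariski dense subset of $[0,1)$. I extend it to all of $[0,1)$ by continuity, after checking that $M_{i+1}$ does not vanish identically on $C_k$; the index $i$ can be chosen with this in mind. Since $\phi_k=\psi_k\circ\alpha$, composing with $\alpha^{-1}(t)=(u_{k+1}-u_k)t+u_k$ gives exactly the $k$-th branch of \eqref{inv_with_locals}, and it inverts $\phi_k$ on $C_k$.

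For the gluing step, note that the half-open intervals $I_k$ are disjoint and their union, with the endpoint $u_m$ added, is $[u_0,u_m]$. So every point of $C$ lies in some $C_k$, and the piecewise definition covers $C$. Well-definedness requires that the sets $C_k$ be pairwise disjoint. This is exactly global injectivity of $\phi$: a point in $C_k\cap C_l$ for $k\neq l$ would have two preimages. If $\phi$ is injective, $\phi^{-1}$ is then the map \eqref{inv_with_locals}, and both $\phi^{-1}\circ\phi=\mathrm{id}$ and $\phi\circ\phi^{-1}=\mathrm{id}$ follow branchwise.

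I expect two obstacles, and I do not see how to resolve either from the hypotheses as stated.
\begin{itemize}
\item \textbf{Exceptional points within a piece.} Generality only gives injectivity of $\psi_k$ off finitely many points. There may be complex or real parameters where both $X$ and $Y$ have two common roots (singular points of $C_k$), or where all $M_i$ vanish simultaneously. At such points the formula is undefined or ambiguous. I would handle these by taking the value obtained as a continuous extension along the curve, using that $t\mapsto\psi_k(t)$ is a homeomorphism onto $C_k$ when restricted to the real interval. This is valid only once real injectivity holds.
\item \textbf{Global injectivity.} I cannot derive it from \Cref{eq: general}. Distinct pieces could cross, and a single piece can have a real self-intersection, for instance a real node of a generic cubic. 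Both are compatible with generality and would make \eqref{inv_with_locals} ill defined there. I would therefore read the theorem as tacitly assuming that $\phi$ is injective, i.e.\ a bijection onto $C$, in line with the introduction's remark that bijectivity is required in applications. Under that assumption, generality is exactly what makes each branch given by the explicit rational formula \eqref{nurbs_curve_local}, and the proof reduces to the steps above.
\end{itemize}
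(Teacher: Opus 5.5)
Your argument follows the same route as the paper's proof: restrict to one knot interval, conjugate by the affine map $\alpha$, and invoke \cite[Proposition 2.1]{buse_dandrea} for the local inverse. It is more complete than the paper's version. The paper reads off $\psi_k^{-1}\circ\psi_k=\mathrm{id}_{[0,1)}$ directly from \eqref{nurbs_curve_local}, without showing that generality forces $\psi_k$ to be proper; your common-factor and L\"uroth step supplies this. The paper also opens with ``it is enough to prove the formula locally'', which tacitly assumes that the pieces $C_k$ are pairwise disjoint and that each $\psi_k$ is injective on $[0,1)$.

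Both obstacles you raise are genuine, and the paper's proof does not overcome them either.
\begin{itemize}
\item \emph{Injectivity.} Generality does not imply injectivity. A single cubic piece with a real loop is general. The paper itself concedes, in the Remark after Theorem~\ref{thm_inv_spline} and in the quintic example, that $\phi^{-1}$ becomes multivalued at a self-intersection. That Remark also asserts that each $\phi|_{I_k}$ is one-to-one because it is birational, and your nodal-cubic example refutes exactly this. So adding injectivity, as the paper does explicitly in Theorem~\ref{thm_inv_spline}, corrects the statement rather than weakening your proof.
\item \emph{Exceptional points.} At singular points every maximal minor vanishes, so the quotient $M_i/M_{i+1}$ is undefined there. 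The problem also occurs at smooth points: when $\phi(u_k)=\psi_k(0)$ is smooth, the kernel vector has a single nonzero entry, so all but one of the quotients of consecutive minors are $0/0$. Choosing the index and extending along the real arc, as you propose, is therefore necessary. It is legitimate once $\psi_k$ is injective on $[0,1)$.
\end{itemize}

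One small repair: with half-open intervals $I_k$, the point $\phi(u_m)$ lies in no $C_k$. The last branch should therefore be defined on $\phi([u_{m-1},u_m])$.
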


\begin{proof}
    It is enough to prove the formula locally on $I_k = [u_k,u_{k+1})$ and $C_k = \phi(I_k)$, for any $0\leq k\leq m-1$. 
    First, from the change of variable 
    $
    t = \alpha(u) = (u - u_k) / (u_{k+1} - u_k)
    $ 
    we obtain the inverse change of variable $u = \alpha^{-1}(t) = (u_{k+1}-u_k) t + u_k$. 
    By definition, we have the identities 
    $$
    \phi_k(u) = 
    \phi_k(\alpha^{-1}(t))
    = 
    (\phi_k \circ \alpha^{-1})(t)
    = 
    \psi_k(t)
    =
    \psi_k(\alpha(u))
    = 
    (\psi_k \circ \alpha )(u)
    = 
    \phi_k(u) 
    \ .
    $$
    Hence, $\psi_k = \phi_k \circ \alpha^{-1}$ over $[0,1)$ and $\phi_k = \psi_k \circ \alpha$ over $I_k$. 
    Moreover, by \eqref{nurbs_curve local} and \eqref{nurbs_curve_local} the composition $\psi_{k}^{-1} \circ \psi_k = \text{id}_{[0,1)}$ is the identity on $[0,1)$. 
    From \eqref{inv_with_locals}, we have 
    $$
    \phi^{-1}(x,y) = (u_{k+1} - u_k) \, \psi_k^{-1}(x,y) + u_k = (\alpha^{-1}\circ \psi_k)(x,y)
    $$  
    for $(x,y)\in C_k$. Therefore, we can write 
    $$
    \phi^{-1}
    \circ 
    \phi
    = 
    (\alpha^{-1} \circ \psi_k^{-1}) \circ 
    (\psi_k \circ \alpha)
    = 
    \alpha^{-1}
    \circ 
    (
    \psi_k^{-1}
    \circ 
    \psi_k
    )
    \circ 
    \alpha
    =
    \alpha^{-1}
    \circ 
    \alpha
    = 
    \text{id}_{I_k}
    $$
    for every $u\in I_k$. Similarly we deduce that $\phi \circ \phi^{-1} = \text{id}_{[0,1)}$. 
\end{proof} 

The formulas for the inverse map $\phi^{-1}$ from Theorem \ref{theorem: piecewise NURBS inverse} are valid for splines of any degree $d \geq 1$. 
However, in the quadratic case $d = 2$ one can give a more explicit characterization of the relationship between the defining polynomials of $\phi^{-1}$ and the associated control points. 
Specifically, for each $k = 0,\ldots, m-1$, following \cite{Yan2024}, we define 
$$
a_k = \frac{u_k - u_{k-1}}{u_{k+1} - u_{k-1}} 
\ ,\ 
b_k = \frac{u_{k+1} - u_{k}}{u_{k+2} - u_{k}} 
\ ,\ 
S_{k,2}[w] = 
\begin{pmatrix}
    w_{k-2}( 1-a_k) & w_{k-2} a_k & 0 \\
    0   & w_{k-1} & 0 \\ 
    0   & w_k (1-b_k) & w_k b_k
\end{pmatrix}
\  .
$$
Here, the matrix $S_{k,2}[w]$ coincides with the matrix $S_{k,2}$ from Section \ref{subsection: conversion} after multiplying the $i$-th row by the weight $w_{k- 2 + i}$ for each $0\leq i\leq 2$. 
In particular, if all the weights are equal to one, the two matrices coincide. 
The proof of the following lemma is based on syzygies, and is independent of the general approach adopted in this section. 

\begin{lemma}
Let $\phi$ be quadratic, without three consecutive control points in a line. 
Then, $\phi$ admits an inverse rational spline $\phi^{-1}: C \xrightarrow{} [u_0,u_m]$, defined for $(x,y)\in C_k$ as 
$$
\phi^{-1}
(x,y)
= 
\frac
{
\frac{1}{2}
\det\left( {S_{k,2}[w]}^{(2)} \right)}
{
\frac{1}{2}
\det\left( {S_{k,2}[w]}^{(2)} \right)
- 
\det\left( {S_{k,2}[w]}^{(1)} \right)
}
= 
\frac
{ \det\left( {S_{k,2}[w]}^{(3)} \right) }
{
\det\left( {S_{k,2}[w]}^{(3)} \right)
- 
\frac{1}{2}
\det\left( {S_{k,2}[w]}^{(2)} \right)
}
$$
where ${S_{k,2}[w]}^{(i)}$ is obtained by replacing the $i$-th column of ${S_{k,2}[w]}$ with $\bs{\rho}(x,y)$, and  
$$
\bs{\rho}(x,y) 
= 
(
\rho_0(x,y)
,
-\rho_1(x,y)
,
\rho_2(x,y)
)^t
\ \ ,\ \ 
\rho_i(x,y) 
= 
\det
\left(  
\mbf{P}_j
, 
\mbf{P}_{k}
, \mbf{X}
\right)
\ \ 
, 
\ \ 
\mbf{X} = (1,x,y)^t
$$
for each $\{ i,j,k \} = \{ 1,2,3 \}$ where $j<k$. 
\end{lemma}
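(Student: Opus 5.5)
On each physical knot interval $C_k$ I would write the inverse using barycentric coordinates with respect to the three active control points. This avoids the Sylvester-matrix machinery of Theorem \ref{theorem: piecewise NURBS inverse}; the syzygies are the three linear forms $\rho_i$. Fix $k$ and let $\ovl{\mbf{P}}_{k-2},\ovl{\mbf{P}}_{k-1},\ovl{\mbf{P}}_k$ be the homogeneous control points active on $I_k$. By \eqref{eq: conversion} with $d=2$, and after absorbing the weights into $S_{k,2}[w]$, the local homogeneous parametrization on $[0,1)$ is
$$
\Phi(t)=\begin{pmatrix} B_{0,2}(t) & B_{1,2}(t) & B_{2,2}(t)\end{pmatrix}\, S_{k,2}[w]\,\begin{pmatrix}\ovl{\mbf{P}}_{k-2}\\ \ovl{\mbf{P}}_{k-1}\\ \ovl{\mbf{P}}_{k}\end{pmatrix}.
$$

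The key steps are as follows.

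\textbf{Step 1 (barycentric coordinates).} The three points are not collinear, so the $3\times 3$ matrix $D$ with rows $\ovl{\mbf{P}}_{k-2},\ovl{\mbf{P}}_{k-1},\ovl{\mbf{P}}_k$ is invertible. Every $\mbf{X}=(1,x,y)^t$ can therefore be written uniquely as $\mbf{X}^t=\bs{\lambda}\,D$. By Cramer's rule, $\bs{\lambda}=\bs{\rho}(x,y)^t/\det D$, where $\bs{\rho}$ is exactly the vector of signed $2\times 2$-type determinants $\det(\mbf{P}_j,\mbf{P}_k,\mbf{X})$ in the statement. The sign pattern $(+,-,+)$ is the cofactor expansion sign. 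Equivalently, the $\rho_i$ are the three linear syzygies of the lines through pairs of control points.

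\textbf{Step 2 (linear system for the Bernstein values).} Let $(x,y)=\psi_k(t)\in C_k$. Then $\mbf{X}$ is proportional to $\Phi(t)$, with factor $1/f_0^{(k)}(t)$. Comparing with Step 1 and using the invertibility of $D$ gives
$$
\bs{\rho}(x,y)^t = c\,\bigl(B_{0,2}(t),B_{1,2}(t),B_{2,2}(t)\bigr)\,S_{k,2}[w]
$$
for a nonzero scalar $c=\det D/f_0^{(k)}(t)$. Transposing gives the square system $S_{k,2}[w]^t\,\mathbf{B}=\bs{\rho}/c$. I would solve it by Cramer's rule: $c\,B_{i,2}(t)\det S_{k,2}[w]$ is the determinant obtained by replacing a column with $\bs{\rho}$. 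The shape of $S_{k,2}[w]$ is sparse: the middle row is $(0,w_{k-1},0)$ and the matrix is upper-bidiagonal-like. I would check directly that the replacement can be phrased as replacing the $i$-th column of $S_{k,2}[w]$ itself, as in the statement, possibly up to a harmless global sign. This uses $\det S = \det S^t$ and expansion along the middle row.

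\textbf{Step 3 (recovering $t$).} Since $B_{1,2}=2t(1-t)$ and $B_{2,2}=t^2$, we have $t=B_{2,2}/(B_{2,2}+\tfrac12 B_{1,2})$. Similarly, $B_{0,2}=(1-t)^2$ gives $t=\tfrac12 B_{1,2}/(\tfrac12 B_{1,2}+B_{0,2})$. Substituting the Cramer expressions, the common factor $c\det S_{k,2}[w]$ cancels. This yields the two displayed quotients, and the minus signs come from the $-\rho_1$ entry. Finally I compose with $\alpha^{-1}$ as in the proof of Theorem \ref{theorem: piecewise NURBS inverse} to pass from $t$ to $u$. The same argument then shows that $\phi^{-1}\circ\phi$ is the identity on each $I_k$. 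The resulting function is a rational spline in the sense of Definition \ref{def: spline on curve}.

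The main obstacle is bookkeeping rather than ideas. First, the column-versus-row replacement convention and the sign of $\bs{\rho}$ must match the statement exactly. Second, the non-degeneracy must be justified: $\det S_{k,2}[w]=w_{k-2}w_{k-1}w_k(1-a_k)b_k\neq 0$, which requires the positive weights and the knot conditions $a_k\ne 1$, $b_k\ne 0$, valid on a nondegenerate interval $I_k$. In addition, the denominators must not both vanish on $C_k$, which holds because $B_{1,2}$ and $B_{2,2}$ do not vanish simultaneously for $t\in(0,1)$. I would first verify Step 2 on the explicit $3\times 3$ matrix $S_{k,2}[w]$ by hand, since its sparsity makes each Cramer determinant a product of at most two terms.
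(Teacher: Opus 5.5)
The paper states this lemma without proof; it only says the argument is syzygy-based. Your cofactor vector $\bs{\rho}^t=(\det D)\,\bs{\lambda}$ is the natural such argument. However, the two steps you defer as bookkeeping are exactly where it breaks, and they cannot be repaired, because the displayed formula is false.

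\emph{Weights.} By \eqref{eq: conversion} the weights multiply the columns of $S_{k,2}$. Writing $\mathbf{B}=(B_{0,2},B_{1,2},B_{2,2})^t$, one has $\Phi(t)=\mathbf{B}^t M D$ with $M=S_{k,2}\,\mathrm{diag}(w_{k-2},w_{k-1},w_k)$. By contrast, $S_{k,2}[w]$ scales rows: its $(1,2)$ entry is $w_{k-2}a_k$ instead of $w_{k-1}a_k$. So your starting identity fails for unequal weights.

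\emph{Transposition.} Cramer's rule for $M^t\mathbf{B}=(f_0^{(k)}/\det D)\,\bs{\rho}$ replaces columns of $M^t$, i.e.\ rows of $M$. Since $S_{k,2}$ is not symmetric unless $a_k=0$ and $b_k=1$, this is not column replacement up to a global sign. For example, with unit weights, replacing the first row gives $b_k\rho_0$. Replacing the first column gives $b_k(\rho_0+a_k\rho_1)$, which is not proportional to $B_{0,2}$ on $C_k$ when $a_k\neq 0$.

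\emph{Signs.} Since $\bs{\rho}^t=(\det D)\bs{\lambda}$ exactly, all three Cramer determinants equal $B_{i,2}(t)$ times one common factor. Hence $t=B_{2,2}/(B_{2,2}+\tfrac12 B_{1,2})$ produces plus signs in the denominators. Your claim that the minus signs come from the $-\rho_1$ entry double-counts the sign already used in Step 1.

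The statement already fails for a single B\'ezier segment, $U=\{0,0,0,1,1,1\}$. There $a_k=0$, $b_k=1$, $S_{k,2}[w]$ is diagonal, and the first two issues disappear. On the curve, $\det S_{k,2}[w]^{(1)}:\det S_{k,2}[w]^{(2)}:\det S_{k,2}[w]^{(3)}=B_{0,2}:B_{1,2}:B_{2,2}$. Hence both displayed quotients equal $\frac{\frac12 B_{1,2}}{\frac12 B_{1,2}-B_{0,2}}=\frac{t}{2t-1}$. Concretely, take $\mbf{P}_0=(0,0)$, $\mbf{P}_1=(1,1)$, $\mbf{P}_2=(2,0)$, unit weights and $t=1/4$, so $(x,y)=(1/2,3/8)$. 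Then $\bs{\rho}=(-9/8,-3/4,-1/8)^t$, and both quotients evaluate to $-1/2$ instead of $1/4$. So no argument can prove the lemma as written.

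What your method does prove, carried out carefully, is a corrected statement. Let $\widehat{S}=M^t$, and let $\widehat{S}^{(i)}$ be obtained by replacing its $i$-th column by $\bs{\rho}$. Then
\[
\psi_k^{-1}(x,y)=\frac{\tfrac12\det\widehat{S}^{(2)}}{\tfrac12\det\widehat{S}^{(2)}+\det\widehat{S}^{(1)}}=\frac{\det\widehat{S}^{(3)}}{\det\widehat{S}^{(3)}+\tfrac12\det\widehat{S}^{(2)}},
\]
and $\phi^{-1}=\alpha^{-1}\circ\psi_k^{-1}$ on $C_k$; the statement also omits this final $\alpha^{-1}$. You should state and prove this version explicitly, rather than asserting agreement with the displayed formula up to harmless signs.
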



\section{Physical Rational Spline Functions}\label{phys_rat_splines}
    Recall that $U'$ is the reduced knot vector of $U$. 
    For each knot $u_j' \in U'$, we define the corresponding \textit{physical knot} as $\mbf{U}_j' = \phi(u_j')$. 
    Given multiplicities $\mu(\mbf{U}_{j}')\geq 0$, we define the \textit{phyisical knot vector} as 
    \begin{equation*}
    \mathcal{U}
    =
    \{ 
    \mbf{U}_0,
    \ldots,
    \mbf{U}_M
    \}
    =
    \{
    \underbrace{\mbf{U}_0',\ldots,\mbf{U}_0'}_{\mu(\mbf{U}_0')} 
    , 
    \ldots
    ,
    \underbrace{
    \mbf{U}_{j}',\ldots,\mbf{U}_{j}'}_{ \mu(\mbf{U}_{j}')} 
    , 
    \ldots 
    ,
    \underbrace{
    \mbf{U}_{m'}',\ldots,\mbf{U}_{m'}'}_{\mu(\mbf{U}_{m'}')}
    \}
    \ .
\end{equation*}
    In this section, we define a family of rational splines supported on $C$ relative to $\mathcal{U}$.
    We call these functions \textit{physical rational} splines, since they are defined by coordinates in the physical domain.
    \begin{definition}[Physical Rational Spline]
    \label{def: physical B-spline}
    Let $C \subset \R^2$ be the curve of degree $d$ defined by $\phi$ as \eqref{nurbs_curve} and $\phi^{-1}$ be as \eqref{inv_with_locals}.
    Given $p,k$ satisfying $0\leq k,p \leq M -1$, the $k$-th physical rational spline $\mathcal{N}_{k,p}:C\xrightarrow{}\R$  with physical knot vector $\mathcal{U}$ is the function recursively defined as 
    \begin{align}
    \label{phys_coxdeBoor}
\mathcal{N}_{k,p}(x,y) & = \frac{
\phi^{-1}(x,y) - \phi^{-1}(\mbf{U}_k)
}{
\phi^{-1}(\mbf{U}_{k+p})-\phi^{-1}(\mbf{U}_k)
} 
\,
\mathcal{N}_{k,p-1}(x,y) 
+ 
\frac{
\phi^{-1}(\mbf{U}_{k+p+1})-\phi^{-1}(x,y)
} 
{
\phi^{-1}(\mbf{U}_{k+p+1}) - \phi^{-1}(\mbf{U}_{k+1})
}
\,
\mathcal{N}_{k+1,p-1}(x,y),
\end{align}
where
\begin{equation*}
\mathcal{N}_{k,0}(x,y)= 
\begin{cases*}
1 \quad &\text{if} $(x,y)\in C_{k}$ \\
0 \quad &\text{otherwise}
\end{cases*}.
\end{equation*} 
\end{definition}

\begin{remark}
    The degree $p$ in Definition \ref{def: physical B-spline} is not necessarily equal to the degree of the defining B-splines of $\phi$.
\end{remark}

As in \eqref{bspline_def}, if the quotients in \eqref{phys_coxdeBoor} yield a division $0/0$, we set it to be 0. 
Here, the recursive formula \eqref{phys_coxdeBoor} is the analog of the Cox-de Boor formula \eqref{bspline_def} over the curve $C$. 

The following result motivates the introduction of physical rational splines. 
In particular, it establishes that such functions are the pullbacks by $\phi^{-1}$ of the parametric B-splines, whenever the multiplicities of the physical and parametric knots coincide. 

\begin{lemma}
\label{lemma: physical B-splines as pullbacks}
    Let 
    $
    U = \{ u_0,\ldots,u_m \} 
    $ and  
    $
    \mathcal{U} = \{\mbf{U}_0 ,\ldots, \mbf{U}_m\}
    $ 
    be such that $\phi(u_k) = \mbf{U}_k$, i$.$e$.$ corresponding knots on the parametric and physical knot vectors have exactly the same multiplicity. 
    Then, we have  
    \begin{equation}
        \label{eq: physical B-splines as pullbacks}
        \mathcal{N}_{k,p}(x,y)
        =
        N_{k,p}(\phi^{-1}(x,y))
        \ ,\ 
        N_{k,p}(u)
        = 
        \mathcal{N}_{k,p}(\phi(u))
    \end{equation}
    for every $(x,y)\in C$ and $u\in [u_0,u_m]$.
\end{lemma}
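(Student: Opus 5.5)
Since $\phi$ is general, Theorem~\ref{theorem: piecewise NURBS inverse} makes $\phi$ a bijection $[u_0,u_m]\to C$ with inverse $\phi^{-1}$. The two identities in \eqref{eq: physical B-splines as pullbacks} are therefore equivalent: substituting $(x,y)=\phi(u)$ in the first gives the second, and substituting $u=\phi^{-1}(x,y)$ in the second gives the first. I will prove only the first identity, by induction on $p$, showing that the recursion \eqref{phys_coxdeBoor} is exactly the Cox--de Boor recursion \eqref{bspline_def} transported by $\phi^{-1}$.

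\textbf{Knot values.} By hypothesis $\mbf{U}_k=\phi(u_k)$ for every index $k$, with matching multiplicities, so $\phi^{-1}(\mbf{U}_k)=u_k$ for all $0\le k\le m$. Writing $u=\phi^{-1}(x,y)$, every coefficient in \eqref{phys_coxdeBoor} becomes the corresponding coefficient in \eqref{bspline_def}, for example
$$
\frac{\phi^{-1}(x,y)-\phi^{-1}(\mbf{U}_k)}{\phi^{-1}(\mbf{U}_{k+p})-\phi^{-1}(\mbf{U}_k)}=\frac{u-u_k}{u_{k+p}-u_k}.
$$
A denominator $\phi^{-1}(\mbf{U}_{k+p})-\phi^{-1}(\mbf{U}_k)$ vanishes exactly when $u_{k+p}=u_k$. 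So the $0/0=0$ conventions in the two recursions are triggered in the same cases, and they produce the same value.

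\textbf{Base case $p=0$.} I need $\mathcal{N}_{k,0}=N_{k,0}\circ\phi^{-1}$, that is, $(x,y)\in C_k$ if and only if $\phi^{-1}(x,y)\in I_k$. Since $C_k=\phi(I_k)$ and $\phi$ is bijective, this holds. The step needing care is the indexing of the physical intervals. The physical indices run over the full (non-reduced) vector $\mathcal{U}$ with the same multiplicities as $U$. I will therefore read $C_k$ as $\phi([u_k,u_{k+1}))$, which is empty when $u_k=u_{k+1}$, exactly as $I_k$ is empty and $N_{k,0}\equiv 0$. I would also record the behaviour at the right endpoint $u_m$, which follows whatever convention is used for $N_{k,0}$ at $u_m$. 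Beyond this bookkeeping I expect no real obstacle: the main point to verify is that the half-open intervals $I_k$, together with the endpoint, are mapped bijectively onto the pieces $C_k$. That is guaranteed by the injectivity of $\phi$ from Theorem~\ref{theorem: piecewise NURBS inverse}.

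\textbf{Inductive step.} Assume $\mathcal{N}_{j,p-1}=N_{j,p-1}\circ\phi^{-1}$ for all $j$. Substituting this and the knot identifications into \eqref{phys_coxdeBoor} gives
$$
\mathcal{N}_{k,p}(x,y)=\frac{u-u_k}{u_{k+p}-u_k}N_{k,p-1}(u)+\frac{u_{k+p+1}-u}{u_{k+p+1}-u_{k+1}}N_{k+1,p-1}(u)=N_{k,p}(u),
$$
with $u=\phi^{-1}(x,y)$. This proves the first identity. The second follows by composing with $\phi$ and using $\phi^{-1}\circ\phi=\mathrm{id}$.
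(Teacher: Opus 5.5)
Your proposal is correct and follows the paper's proof: induction on $p$ from the base case $\mathcal{N}_{k,0}=N_{k,0}\circ\phi^{-1}$, then substituting $u=\phi^{-1}(x,y)$ and $u_k=\phi^{-1}(\mathbf{U}_k)$ into the Cox--de Boor recursion, and finally obtaining the second identity by composing with $\phi$; your handling of empty pieces $C_k$ for repeated knots and of the matching $0/0$ conventions is a useful refinement that the paper leaves implicit. One caveat: the bijectivity you take from Theorem~\ref{theorem: piecewise NURBS inverse} really means assuming $\phi$ is injective, since generality alone does not exclude self-intersections (as the paper's later remark and quintic example show), but the paper's own proof relies on this same unstated assumption.
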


\begin{proof}
By Theorem \ref{theorem: piecewise NURBS inverse}, it is straightforward that 
\begin{equation*}
    \mathcal{N}_{k,0}(x,y)
    = 
    \begin{cases*}
        1 \quad &\text{if} $(x,y)\in C_{k}$ \\
        0 \quad &\text{otherwise}
    \end{cases*}
    = 
    \begin{cases*}
        1 \quad &\text{if} $\phi^{-1}(x,y)\in I_{k}$ \\
        0 \quad &\text{otherwise}
    \end{cases*}
    = 
    N_{k,0}(\phi^{-1}(x,y))
\end{equation*}
Similarly, we obtain $N_{k,0}(u) = \mathcal{N}_{k,0}(\phi(u))$. 
Hence, in order to prove \eqref{eq: physical B-splines as pullbacks} we can proceed by induction on the degree $p$ with $p = 0$ as the base case. 

Assume that the result is true for degree $p-1$, where $0 \leq p-1 < m-1$. 
We next prove that \eqref{eq: physical B-splines as pullbacks} also holds for degree $p$. 
Given $0\leq k\leq m-1$, the $k$-th spline of degree $p$ on $[u_0,u_m]$ is
$$
N_{k,p}(u)
=
\frac{u-u_k}{u_{k+p}-u_k} N_{k,p-1}(u) + \frac{u_{k+p+1}-u}{u_{k+p+1}-u_{k+1}} N_{k+1,p-1}(u)
\ .
$$
Substituting $u = \phi^{-1}(x,y)$ we obtain 
\begin{align*}
N_{k,p}(\phi^{-1}(x,y))
& = 
\frac{\phi^{-1}(x,y)-u_k}{u_{k+p}-u_k} N_{k,p-1}(\phi^{-1}(x,y)) + \frac{u_{k+p+1}-\phi^{-1}(x,y)}{u_{k+p+1}-u_{k+1}} N_{k+1,p-1}(\phi^{-1}(x,y)) \\ 
& = 
\frac{
\phi^{-1}(x,y) - u_k
}{
u_{k+p}-u_k
} 
\,
\mathcal{N}_{k,p-1}(x,y) 
+ 
\frac{
u_{k+p+1}-\phi^{-1}(x,y)
} 
{
u_{k+p+1} - u_{k+1}
}
\,
\mathcal{N}_{k+1,p-1}(x,y) 
\end{align*}
where in the second equality we use the induction hypothesis. 
Finally, as $u_k = \phi^{-1}(\mbf{U}_k)$ we can write 
\begin{align*}
N_{k,p}(\phi^{-1}(x,y))
& =  
\frac{
\phi^{-1}(x,y) - \phi^{-1}(\mbf{U}_k)
}{
\phi^{-1}(\mbf{U}_{k+p})-\phi^{-1}(\mbf{U}_k)
} 
\,
\mathcal{N}_{k,p-1}(x,y) 
+ 
\frac{
\phi^{-1}(\mbf{U}_{k+p+1})-\phi^{-1}(x,y)
} 
{
\phi^{-1}(\mbf{U}_{k+p+1}) - \phi^{-1}(\mbf{U}_{k+1})
}
\,
\mathcal{N}_{k+1,p-1}(x,y) 
\\ 
& = \mathcal{N}_{k,p}(x,y) 
\end{align*}
that is the first identity in \eqref{eq: physical B-splines as pullbacks}. 
Since $\phi^{-1}(\phi(u)) = u$ for every $u\in [u_0,u_m]$, we obtain 
$$
N_{k,p}(\phi^{-1}(\phi(u)))
= 
N_{k,p}(u)
=
\mathcal{N}_{k,p}(\phi(u))
$$
and the statement follows. 
\end{proof}

Lemma \ref{lemma: physical B-splines as pullbacks} ensures that physical rational splines inherit the analogous properties of parametric B-splines.
To make this precise, we briefly recall the definition of regularity classes for parametric curves.

\begin{definition}
    \label{definition: regularity on curves}
    Let $\phi: I\subset \R \xrightarrow{} \R^N$ 
    be a parametrization of a curve, and let $F:\phi(I) \xrightarrow{} \R$ be a function. 
    We say: 
    \begin{equation*}
        F\in C^n(\phi(I)) 
        \iff 
        F \circ \phi \in C^n(I)
    \end{equation*}
\end{definition}

Physical rational spline functions therefore satisfy the same standard properties as parametric B-spline functions.

\begin{proposition}\label{prop_phys_bsplines}
With the previous notation, the following properties hold: 
    \begin{enumerate}
        \item (Nonnegativity) $\mathcal{N}_{k,p}(x,y) \geq 0$ for every $(x,y)\in C$.
        \item (Local Support) 
        $\mathcal{N}_{k,p}(x,y) = 0$ if $(x,y)\not\in C_k \cup \ldots \cup C_{k+p}$.  
        \item (Partition of Unity) $\sum_{k = 0}^{M-p-1} \mathcal{N}_{k,p}(x,y) = 1$.
        \item (Regularity) $\mathcal{N}_{k,p}(x,y)\in C^{p-\mu(\mbf{U}_k)}(\phi((u_{k-1},u_{k+1})))$ 
        for every $0<k<m$.
        Equivalently, $\mathcal{N}_{k,p}(x,y)$ is smooth up to order $p - \mu(\mbf{U}_k)$ at the physical knot $\mbf{U}_k$.
    \end{enumerate}
    \begin{proof}
    		The proof follows immediately from the standard properties of B-splines and Lemma \ref{lemma: physical B-splines as pullbacks}.
    	
    \end{proof}
    \end{proposition}

\section{Rational Spline Representation of the Inverse} 

\label{sec: inverse B-spline}

In this section, we prove that a general NURBS curve parametrization admits a rational spline representation. 
Namely, we write the inverse parametrization  globally as a linear combination of physical rational splines. 
Additionally, we provide an explicit formula for such an inverse inverse. 


\begin{definition}[Greville points]
Let $U = \{ u_0, \ldots, u_{m} \}$ be a knot vector. 
For each $0\leq i\leq m - d - 1$ the \textit{Greville points} associated with the $B$-spline $N_{i,d}(u)$ is  
\begin{equation*}
\label{eq: Greville abscissa}
\xi_{i,d} = \frac{1}{d} \sum_{j=1}^d u_{i+j}.
\end{equation*}  
\end{definition}

Since B-splines reproduce polynomials up to their degree, for every $u\in[u_0,u_m]$ and $1 \leq d \leq m-1$ we have the identity
\begin{equation}
\label{eq: identity Greville}
u = \sum_{i=0}^{n} \xi_{i,d} N_{i,d} (u) \ .
\end{equation}
The identity above means that B-splines have linear precision. 
Finally, we have the following result. 
\begin{theorem}
\label{thm_inv_spline}
Let $U = \{u_0,\ldots,u_m\}$ be a knot vector and $\mathcal{U} = \{\phi(u_0), \ldots, \phi(u_{m})\}$ be its associated physical knot vector.
Then, if a general parametrization $\phi:[u_0,u_m]\xrightarrow{} C$ as in \eqref{nurbs_curve} is injective, it admits the inverse $\phi^{-1}:C\xrightarrow{} [u_0,u_m]$ given by 
\begin{equation*}\label{inv_Bspline_nurbs}     
            \phi^{-1}(x,y)=\sum_{i= 0}^{n} \xi_{i,p}\, \cN_{i,p}(x,y)
\end{equation*} 
for every $ 1 \leq p \leq m-1$, 
where $\cN_{i,p}$ are the physical rational splines defined on $\mathcal{U}$. 
\end{theorem}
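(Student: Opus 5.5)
The plan is to reduce the statement to the linear precision identity \eqref{eq: identity Greville} for parametric B-splines, using Lemma \ref{lemma: physical B-splines as pullbacks} to move between parametric and physical splines. Since $\phi$ is general, Theorem \ref{theorem: piecewise NURBS inverse} gives a well-defined piecewise rational map $\phi^{-1}: C \to [u_0,u_m]$. Injectivity is what makes this map single valued and globally consistent. Without it, two knot intervals $I_k$, $I_l$ could have overlapping images $C_k \cap C_l \neq \emptyset$, so the case distinction in \eqref{inv_with_locals} would be ambiguous. With it, the sets $C_k$ partition $C$ (up to the usual half-open convention at the knots), and $\phi^{-1}\circ\phi = \mathrm{id}$ on $[u_0,u_m]$ as well as $\phi\circ\phi^{-1} = \mathrm{id}$ on $C$. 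Injectivity also guarantees that the physical knot vector $\mathcal{U} = \{\phi(u_0),\ldots,\phi(u_m)\}$ has exactly the multiplicities of $U$. Distinct parametric knots cannot map to the same physical point, and repeated knots are repeated in $\mathcal U$ by construction. This is precisely the hypothesis of Lemma \ref{lemma: physical B-splines as pullbacks}.

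Next I would fix $(x,y)\in C$ and set $u = \phi^{-1}(x,y)\in[u_0,u_m]$. For any admissible degree $p$, the linear precision identity \eqref{eq: identity Greville}, applied with $p$ in place of $d$, reads $u = \sum_i \xi_{i,p} N_{i,p}(u)$. Lemma \ref{lemma: physical B-splines as pullbacks} gives $N_{i,p}(\phi^{-1}(x,y)) = \cN_{i,p}(x,y)$ term by term. Substituting yields
\[
\phi^{-1}(x,y) = \sum_i \xi_{i,p}\,\cN_{i,p}(x,y),
\]
which is the claimed formula. The Greville points $\xi_{i,p}$ are computed from the parametric knots. Equivalently, they can be written as $\frac1p\sum_j \phi^{-1}(\mbf U_{i+j})$, which expresses everything in physical data.

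The main obstacle is bookkeeping rather than substance. The summation range must be checked: for degree $p$ the number of B-splines is $m-p$, so the upper index should really be $m-p-1$. For $p\neq d$ the stated range up to $n = m-d-1$ must be read with the convention that splines outside the valid range are zero, or else the range corrected. The identity \eqref{eq: identity Greville} also requires a valid degree-$p$ B-spline setting on $U$; in particular the endpoint multiplicities must be adequate for partition of unity on all of $[u_0,u_m]$. I would verify this first, and restrict to the range where the spline space reproduces linear polynomials, noting that $p\le m-1$ makes the basis nonempty. I would also check the right endpoint $u_m$: the half-open convention for $N_{k,0}$ requires the standard modification that the last nonzero interval be closed, and this must be matched for $\cN_{k,0}$ on $C_{m-1}$ so that the formula holds at $\phi(u_m)$ too.
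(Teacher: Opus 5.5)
Your proof is correct and is essentially the paper's argument. Both rest on Lemma~\ref{lemma: physical B-splines as pullbacks} together with the linear-precision identity \eqref{eq: identity Greville}; you substitute $u=\phi^{-1}(x,y)$ and use the first identity of the lemma, whereas the paper checks the formula at $\phi(u)$ via the second identity and then appeals to injectivity.

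Your bookkeeping caveats are justified and go beyond what the paper checks. With $\mathcal{U}=\phi(U)$, the identity \eqref{eq: identity Greville} with $p$ in place of $d$ holds on all of $[u_0,u_m)$ exactly when $1\le p\le d$, and then every degree-$p$ spline with index $>n$ vanishes, so the sum up to $n$ is fine. For $p>d$ it fails on the end intervals unless the end multiplicities are raised to $p+1$, as is done in the paper's examples. The endpoint $\phi(u_m)$ does need the closed-last-interval convention.
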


\begin{proof}
Since the map is injective, it is enough to check that $(\phi^{-1} \circ \phi)(u) = u$ for every $u\in[u_0,u_m]$. 
Using Lemma \ref{lemma: physical B-splines as pullbacks} we can write 
\begin{gather*}
(\phi^{-1}\circ \phi)(u)
= 
\phi^{-1}(\phi(u))
= 
\sum_{i= 0}^{n} \xi_{i,p}\, \cN_{i,p}(\phi(u))
= 
\sum_{i= 0}^{n} \xi_{i,p}\, N_{i,p}(u)
= 
u
\end{gather*}
where the last identity follows from \eqref{eq: identity Greville}. 
The statement follows.  
\end{proof}

\begin{remark}
    The requirement that $\phi$ be injective fails only when the curve $C$ has self-intersections. 
    Indeed, since the restriction $\phi|_{I_k}: I_k \xrightarrow{} C_k$ is birational for each knot interval $I_k$, the parametrization is locally one-to-one on every $I_k$. 
    If a point $\mbf{Q}\in C$ belongs to both $\phi(I_k)$ and $\phi(I_{k'})$ for $0\leq k < k' < m$, that is, if the curve has a self-intersection, then the local inverses provided by Theorem \ref{theorem: piecewise NURBS inverse} assign distinct preimages to $\mbf{Q}$. 
    Consequently, $\phi^{-1}$ is no longer a single-valued map but a multivalued one.  
    In Example \ref{sec: example - self-intersection}, we illustrate that $\phi^{-1}$ can nevertheless be used effectively to compute the inverse outside the self-intersections of $C$. 
\end{remark}

\begin{remark}
    It is also possible to define the inverse parametrization by mean of physical rational splines of degree greater than $m-1$. To do that, it is required to adjust the multiplicity of the endpoints of the physical knot vector $\cU$ according to the desired degree $p$.
\end{remark}


\begin{corollary}
    If $\phi$ is injective, then 
    $\phi^{-1}\in C^\infty(C)$. 
    In words, $\phi^{-1}$ is infinitely differentiable on the curve $C$. 
\end{corollary}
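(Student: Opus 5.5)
The plan is to use Definition~\ref{definition: regularity on curves}, under which smoothness on $C$ is tested by composing with $\phi$. By that definition, $\phi^{-1}\in C^\infty(C)$ means exactly that $\phi^{-1}\circ\phi\in C^\infty([u_0,u_m])$. Since $\phi$ is injective, Theorem~\ref{thm_inv_spline} applies. So does the first step of its proof, where Lemma~\ref{lemma: physical B-splines as pullbacks} and the linear precision identity \eqref{eq: identity Greville} give
\[
(\phi^{-1}\circ\phi)(u)=\sum_{i=0}^{n}\xi_{i,p}\,\cN_{i,p}(\phi(u))=\sum_{i=0}^{n}\xi_{i,p}\,N_{i,p}(u)=u
\]
for every $u\in[u_0,u_m]$. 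The identity map $u\mapsto u$ is a polynomial of degree one, hence $C^\infty$ on $[u_0,u_m]$, with one-sided derivatives at the endpoints. This gives the claim directly.

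An alternative route, which is consistent with Proposition~\ref{prop_phys_bsplines}, is also available. Each $\cN_{i,p}$ is only $C^{p-\mu(\mbf{U}_k)}$ at the physical knots, so one might worry that the combination $\sum_i \xi_{i,p}\cN_{i,p}$ has limited regularity. The point is that the specific combination with Greville coefficients cancels all jumps in derivatives, since its pullback is the linear function $u$. I would remark this explicitly, to explain why the regularity exceeds that of the individual basis functions. I would also note that the conclusion does not depend on the choice of $p$ in Theorem~\ref{thm_inv_spline}.

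The only step needing care is making sure the definition of regularity on $C$ is used consistently. Injectivity matters here, because without it $\phi^{-1}$ is not single-valued, as discussed in the remark following Theorem~\ref{thm_inv_spline}. Once $\phi^{-1}$ is a well-defined function on $C$ with $\phi^{-1}\circ\phi=\mathrm{id}$, there is no real obstacle. In particular, no analysis of the local rational expressions $M_i/M_{i+1}$ from Theorem~\ref{theorem: piecewise NURBS inverse} near the physical knots is required.
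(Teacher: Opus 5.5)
Your proof is correct and matches the paper's argument: by Definition~\ref{definition: regularity on curves}, the claim reduces to $\phi^{-1}\circ\phi=\mathrm{id}\in C^\infty([u_0,u_m])$. The only difference is that you derive this identity through Theorem~\ref{thm_inv_spline}, which quietly adds the generality hypothesis and the knot-multiplicity matching required by Lemma~\ref{lemma: physical B-splines as pullbacks}; the paper instead uses only that $\phi^{-1}$ is the inverse of the injective map $\phi$, so the spline representation is not needed.
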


\begin{proof}
    It is straightforward from Definition \ref{definition: regularity on curves} since $(\phi^{-1}\circ \phi) = \text{id} \in C^\infty([u_0,u_m])$. 
\end{proof}

    \section{Examples}\label{sec_examples}
    In this section we present different examples in which we apply the proposed construction to compute the inverse function of a given NURBS curve. In particular, we show the case of a quadratic and cubic NURBS curve, a quartic curve with multiple inner knots and a degree 5 NURBS with a self intersection.

      \subsection{Quadratic NURBS}
Let us consider the points
    \begin{equation*}\label{pts_nurbs_quad}
    \mbf{P}_0 = \left(0,0\right),\,  \mbf{P}_1 = \left(\dfrac{1}{4},1\right), \, \mbf{P}_2 = \left(\dfrac{9}{10},-\dfrac{1}{5}\right),\, \mbf{P}_3 = \left(1,\dfrac{2}{5}\right),
    \end{equation*}
    with weights
    \begin{equation*}
    w_0=1,\, w_1=3, \, w_2  = \dfrac{3}{2}, \, w_3 = 1,
    \end{equation*}

     defining a quadratic NURBS $\phi$ on the knot vector $U = \left\{0,0,0,\dfrac{1}{2},1,1,1\right\}$ as
     
     \begin{equation}\label{quad_nurbs_expl}
     \phi(u) = \sum_{i=0}^3 w_i \mbf{P}_i N_{i,2}(u)=(x(u),y(u))=
      \begin{cases*}
            \left(\dfrac{9u^2 - 15u}{55u^2 - 40u - 5},\dfrac{93u^2 - 60u}{ 55u^2 - 40u - 5} \right) \quad &\text{if} $u\in I_{0}$ \\
            \left(\dfrac{-13u^2+19u-1}{5(u-2)^2},\dfrac{47u^2 -80u +35}{5(u-2)^2} \right) \quad &\text{if} $u\in I_{1}$
        \end{cases*}. \\        
     \end{equation}

     \begin{figure}[htbp]
     \centering

  \begin{subfigure}[t]{0.4\textwidth}
    \centering
    \includegraphics[width=\linewidth]{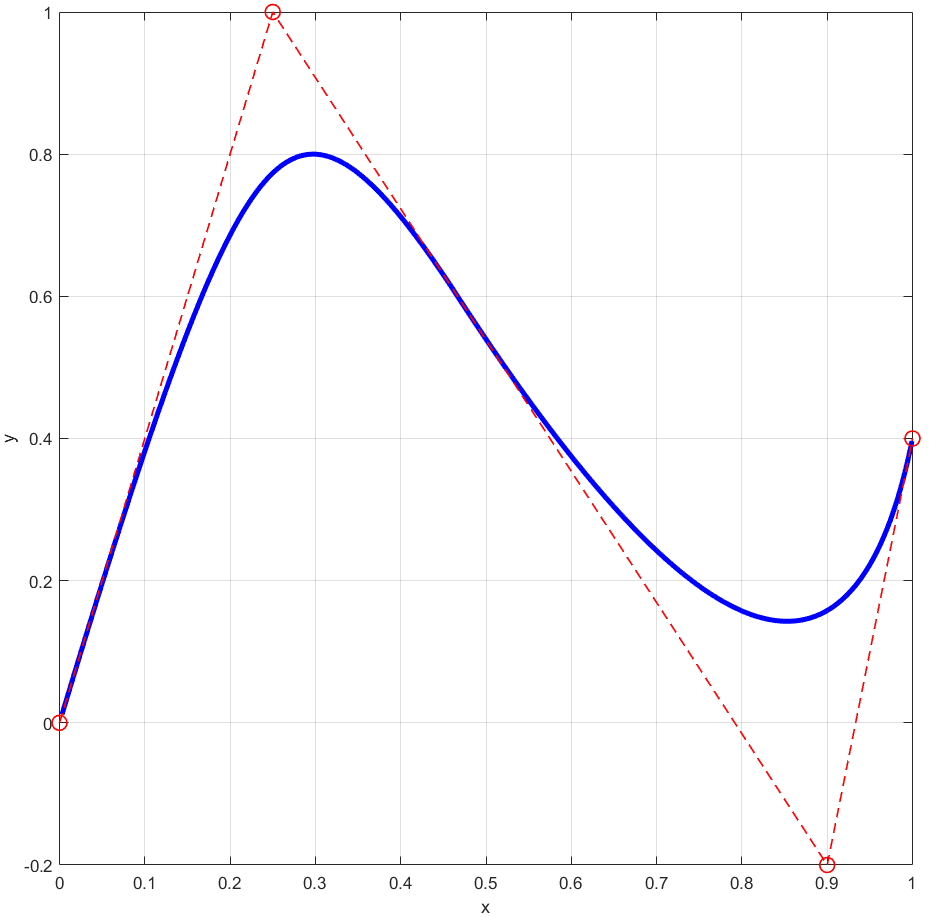}
    \caption{}
  \end{subfigure}
  \begin{subfigure}[t]{0.4\textwidth}
    \centering
    \includegraphics[width=\linewidth]{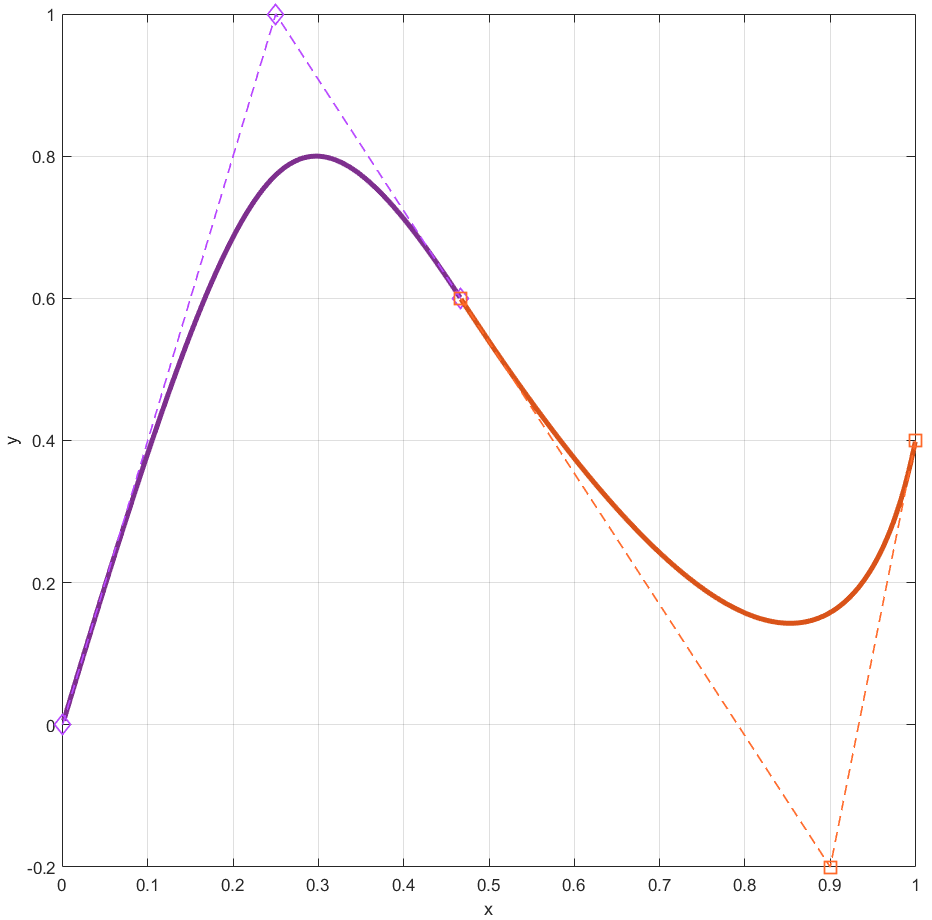}
    \caption{}
  \end{subfigure}
  \caption{NURBS curve with associated control net (a) and its decomposition in rational polynomial pieces (b).}
  \label{nurbs_with_pieces_split_ex}
\end{figure}

Moreover, the physical points associated with the curve are:
    \begin{equation*}
        \mbf{U}_0 = \mbf{P}_0, \,\mbf{U}_1 = \left(\dfrac{7}{15}, \dfrac{3}{5}  \right), \, \mbf{U}_2 = \mbf{P}_3.
    \end{equation*}
    Its graph and its decomposition into (two) rational polynomial pieces are shown in \Cref{nurbs_with_pieces_split_ex}.

    By~\Cref{thm_inv_spline}, the inverse $\phi^{-1}$ can be defined as any linear combination of physical rational splines of a given degree and their associated Greville points. If we take $p=2$, we can define quadratic functions $\cN_{i,2}$ on the physical knot vector 
    $$\cU=\left\{ \mbf{U}_0,\mbf{U}_0,\mbf{U}_0, \mbf{U}_1, \mbf{U}_2, \mbf{U}_2, \mbf{U}_2\right\},$$
    while if we choose $p=1$, the inverse will be defined by linear rational splines $\cN_{i,1}$ with physical knot vector 
    $$\cU=\left\{ \mbf{U}_0,\mbf{U}_0, \mbf{U}_1, \mbf{U}_2, \mbf{U}_2\right\}.$$

    \begin{figure}[htbp]
    \centering
    \begin{subfigure}[b]{0.28\textwidth}
        \centering
        \includegraphics[width=\textwidth]{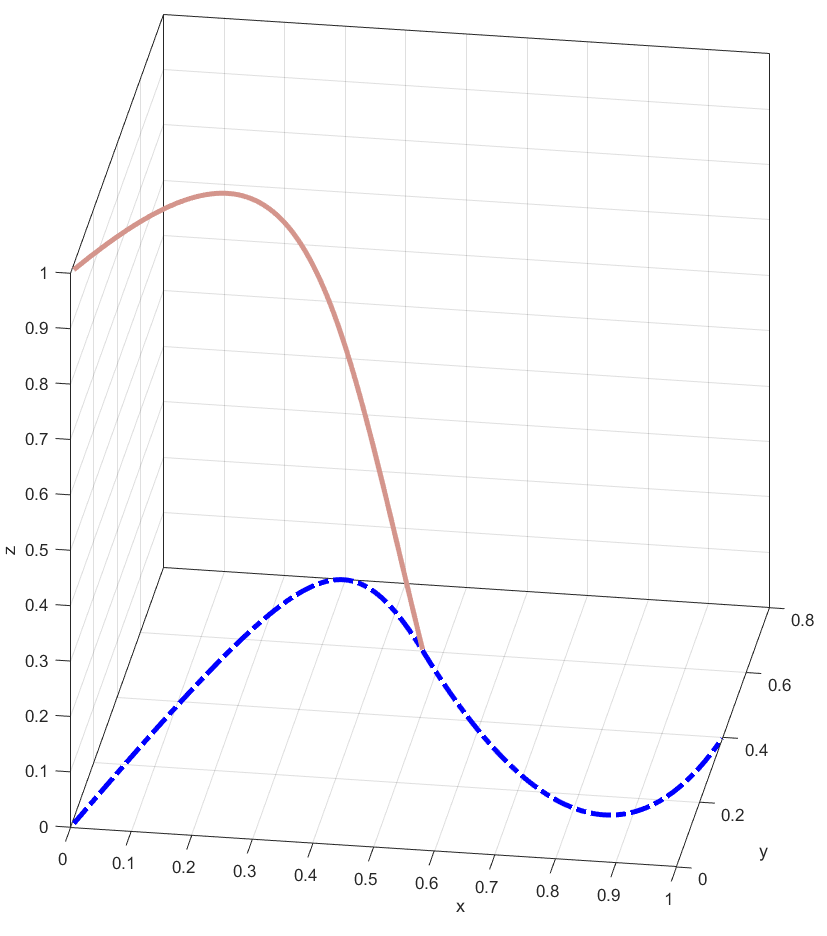}
        \caption{}
    \end{subfigure}
    \hfill
    \begin{subfigure}[b]{0.28\textwidth}
        \centering
        \includegraphics[width=\textwidth]{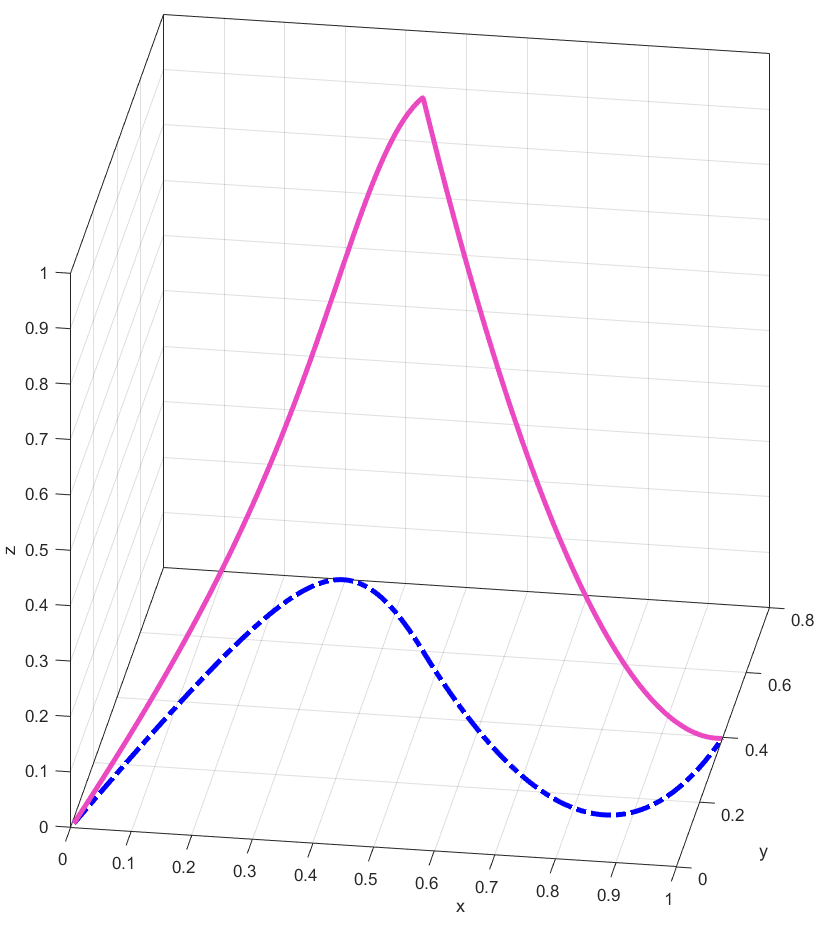}
        \caption{}
    \end{subfigure}
    \hfill
    \begin{subfigure}[b]{0.28\textwidth}
        \centering
        \includegraphics[width=\textwidth]{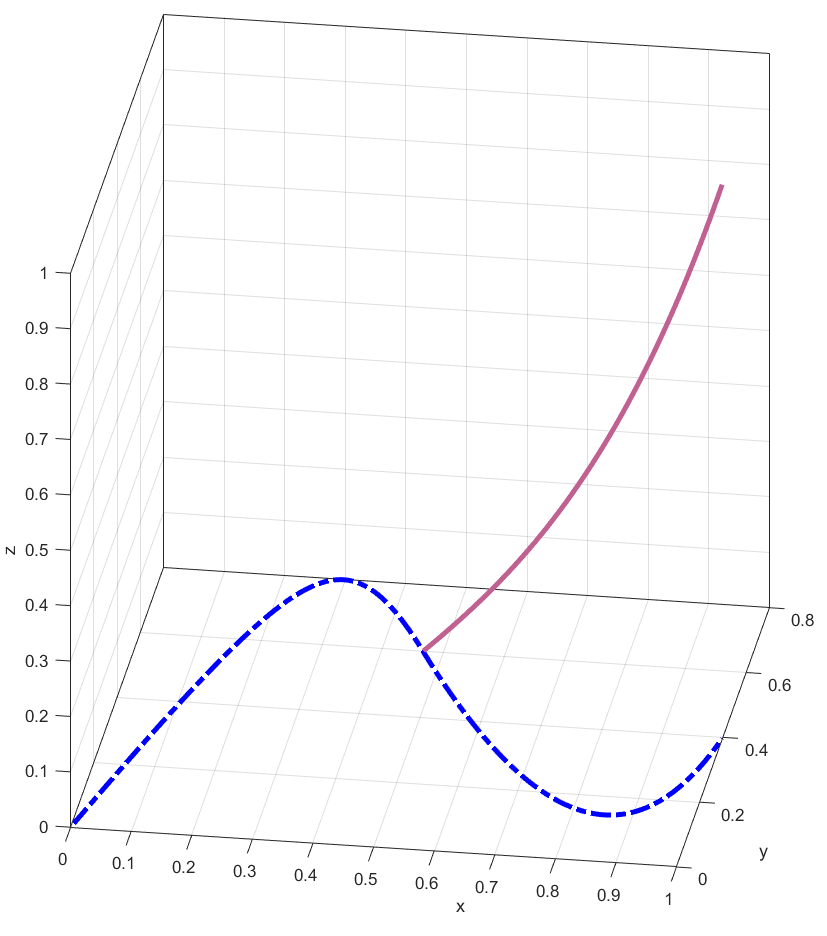}
        \caption{}
    \end{subfigure}

    \vspace{0.5em} 

    \begin{subfigure}[b]{0.33\textwidth}
        \centering
        \includegraphics[width=\textwidth]{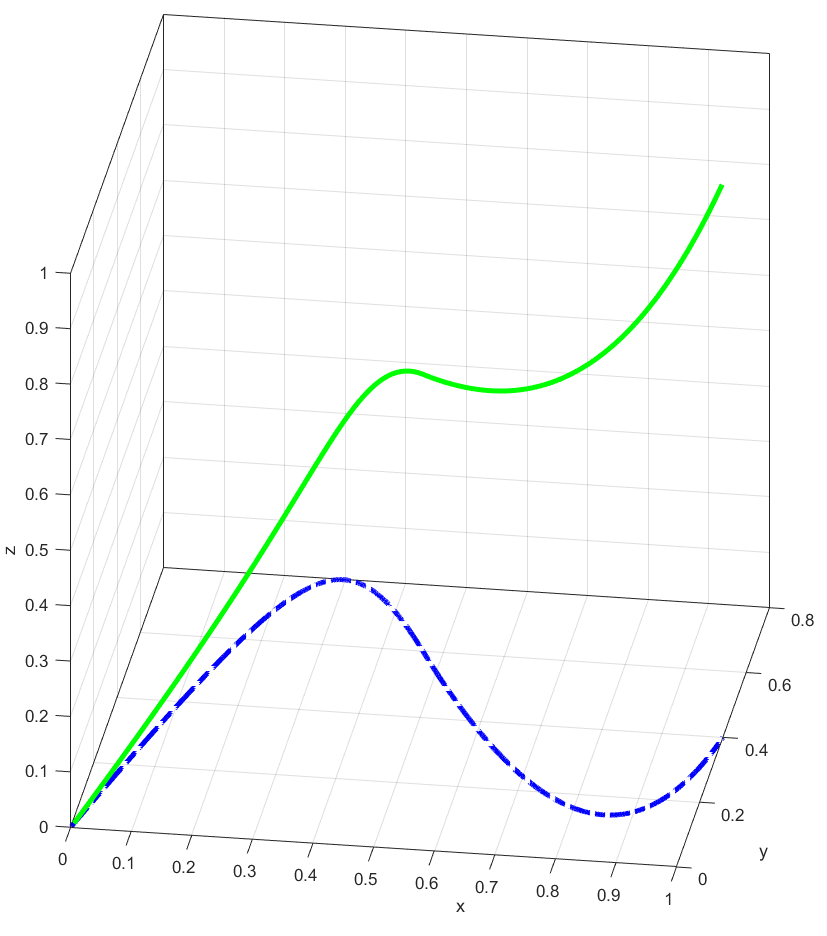}
        \caption{}
    \end{subfigure}

    \caption{The physical B-spline $\cN_{0,1}$ (a), $\cN_{1,1}$ (b), $\cN_{2,1}$ (c) and the graph of the inverse NURBS in \eqref{inv_quad_NURBS}. In dashed line is the input NURBS curve $\phi$.}
    \label{quad_basis_with_inverse}
\end{figure}

    For example, such linear rational splines are defined as:
        \begin{align*}
        &\cN_{0,1}(x,y)=
        \begin{cases*}
            \dfrac{90x+25y-57}{28x+31y-57} \quad &\text{if} $(x,y)\in C_{0}$ \\
          0 \quad &\text{otherwise}
        \end{cases*}, \nonumber \\
         &\cN_{1,1}(x,y)=
        \begin{cases*}
            \dfrac{-62x+6y}{28x+31y-57} \quad &\text{if} $(x,y)\in C_{0}$ \\
          \dfrac{-150x - 60y + 174}{180x + 55y - 49} \quad &\text{if} $(x,y)\in C_{1}$ \\
        \end{cases*}, \nonumber \\     
         &\cN_{2,1}(x,y)=
        \begin{cases*}
            \dfrac{330x + 115y - 223}{180x + 55y - 49} \quad &\text{if} $(x,y)\in C_{1}$ \\
          0 \quad &\text{otherwise}
        \end{cases*}, \nonumber \\
        \end{align*}
whose graphs are shown in \Cref{quad_basis_with_inverse}-(a)-(b)-(c).

In both cases, the inverse parametrization results
    \begin{equation}\label{inv_quad_NURBS}
        \phi^{-1}(x,y) = \sum_{i=0}^3\xi_{i,2}\, \cN_{i,2}(x,y) = \sum_{i=0}^2 \xi_{i,1}\, \cN_{i,1}(x,y) = 
        \begin{cases*}
            \dfrac{ -31x + 3y}{28x + 31y - 57} \quad &\text{if} $(x,y)\in C_{0}$ \\
           \dfrac{255x + 85y - 136}{180x + 55y - 49} \quad &\text{if} $(x,y)\in C_{1}$
        \end{cases*}, \\ 
    \end{equation}
    and \Cref{quad_basis_with_inverse}-(d) illustrates its image. It can be checked that substituting \eqref{quad_nurbs_expl} into \eqref{inv_quad_NURBS} we obtain the identity.

  \subsection{Cubic NURBS}
    The points
    \begin{align*}\label{pts_nurbs}
    \mbf{P}_0 = \left(0,0\right),\,  \mbf{P}_1 = &\left(1,2\right), \, \mbf{P}_2 = \left(\dfrac{3}{4},1\right),\, \mbf{P}_3 = \left(\dfrac{3}{2},-\dfrac{1}{3}\right), \, \mbf{P}_4 = \left(3,\dfrac{7}{4}\right), \, \mbf{P}_5 = \left(\dfrac{11}{4},\dfrac{5}{2}\right), \mbf{P}_6 = \left(4,\dfrac{1}{2}\right),
    \end{align*}
    and weights
    \begin{equation*}
    w_0=1,\, w_1=2, \, w_2 = 1, \, w_3 = \dfrac{5}{2}, \, w_4 = 1, \, w_5=3,\, w_6 = 1,
    \end{equation*}
    define a cubic NURBS curve $\phi$ on the knot vector $U = \left\{0,0,0,0,\dfrac{1}{4},\dfrac{1}{2},\dfrac{3}{4},1,1,1,1\right\}$, whose graph is shown in~\Cref{cubic_and_inverse}-(a). Choosing $p=2$, the physical points forming the physical knot vector 
    $$\cU=\left\{ \mbf{U}_0, \mbf{U}_0, \mbf{U}_0, \mbf{U}_1, \mbf{U}_2, \mbf{U}_3, \mbf{U}_4, \mbf{U}_4, \mbf{U}_4  \right\},$$ 
    on which we can define the quadratic physical rational splines $\cN_{i,2}$ (see~\Cref{Cubic_NURBS_uniform}-(b)) are:
    \begin{equation*}
        \mbf{U}_0 = \mbf{P}_0, \, \mbf{U}_1 = \left(\dfrac{25}{24}, \dfrac{26}{27}  \right), \, \mbf{U}_2 = \left(\dfrac{25}{16}, -\dfrac{7}{144}  \right), \, \mbf{U}_3 = \left(\dfrac{71}{28}, \dfrac{397}{252}  \right), \,\mbf{U}_4 = \mbf{P}_6.
    \end{equation*}
    
 \begin{figure}[htbp]
     \centering
  \begin{subfigure}[t]{0.4\textwidth}
    \centering
    \includegraphics[width=\linewidth]{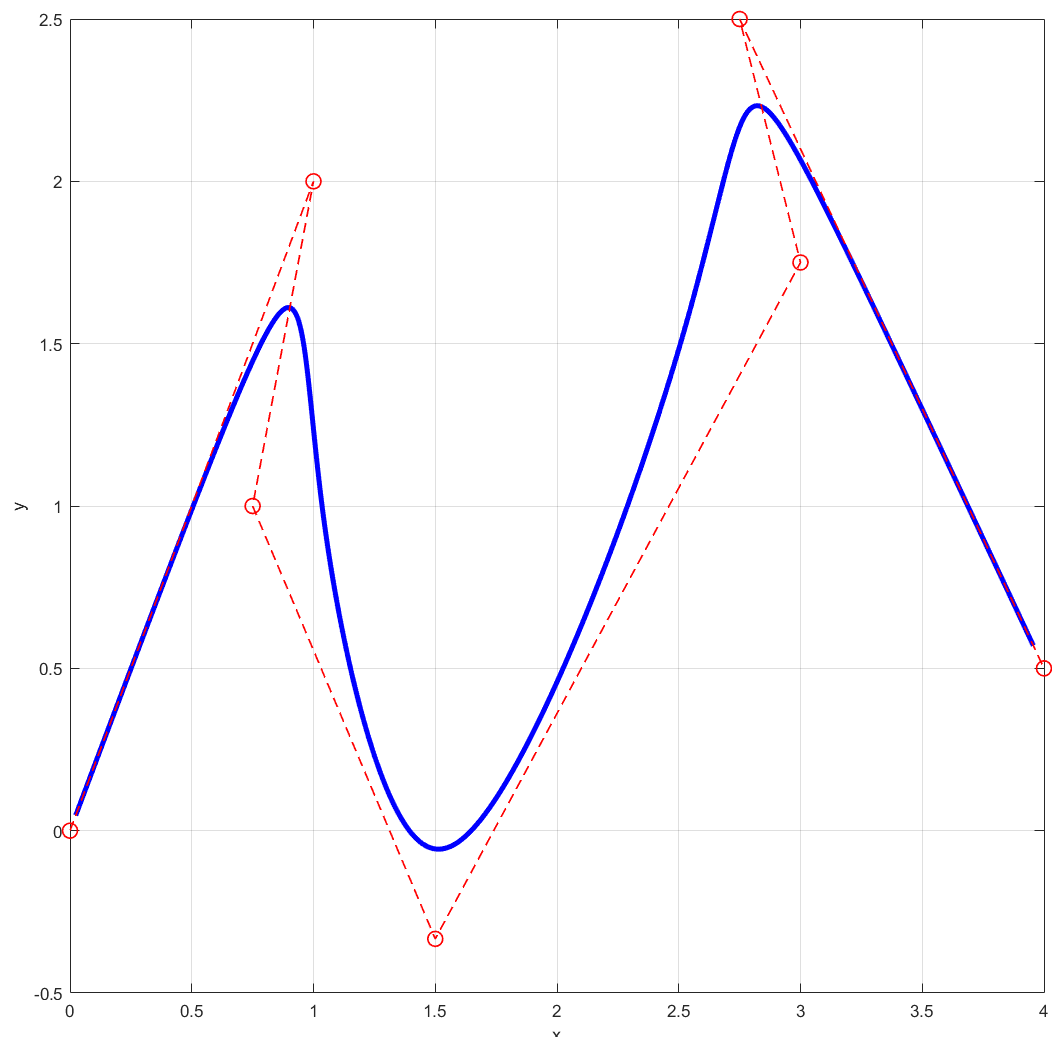}
    \caption{}
  \end{subfigure}
  \begin{subfigure}[t]{0.4\textwidth}
    \centering
    \includegraphics[width=.95\linewidth]{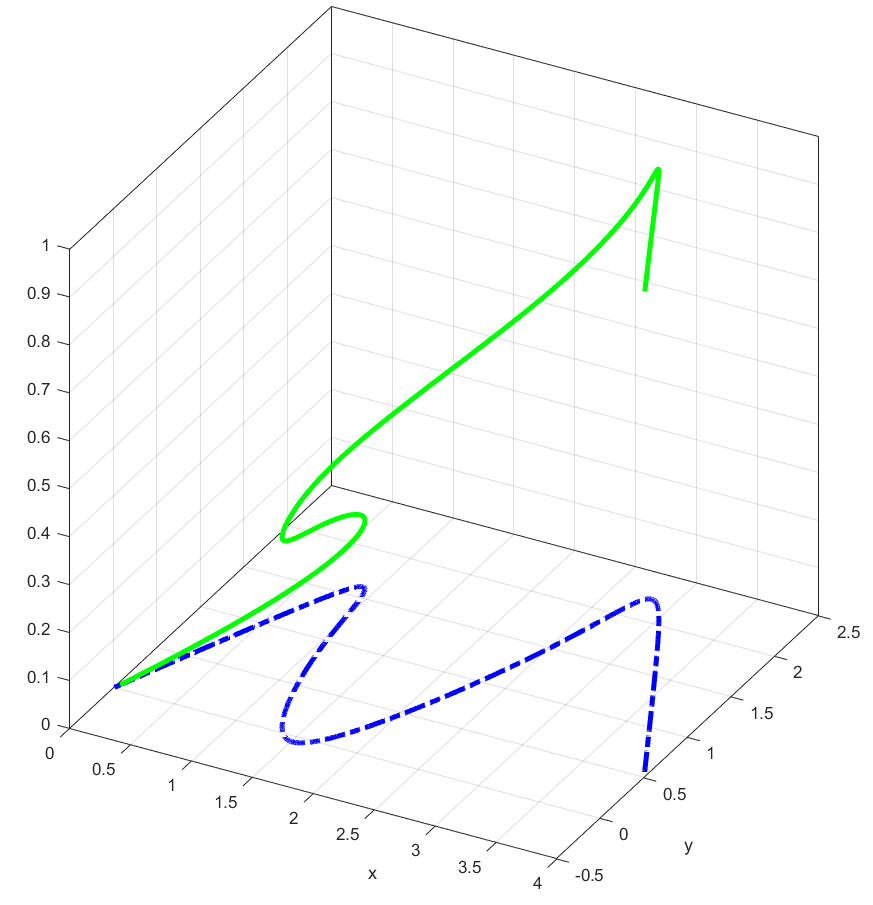}
    \caption{}
  \end{subfigure}
  \caption{NURBS curve with associated control net (a) and the graph of its inverse (b). In dashed line is the input NURBS curve $\phi$.}
  \label{cubic_and_inverse}
\end{figure}

 Hence, the inverse function $\phi^{-1}$ can be defined as
    \begin{equation}\label{inv_cubic}
        \phi^{-1}(x,y) = \sum_{i=0}^5\xi_{i,2}\, \cN_{i,2}(x,y),
    \end{equation}
    whose graph is presented in~\Cref{cubic_and_inverse}-(b).

    \begin{figure}[h!]
    \centering
    \begin{subfigure}[b]{0.28\textwidth}
        \centering
        \includegraphics[width=\textwidth]{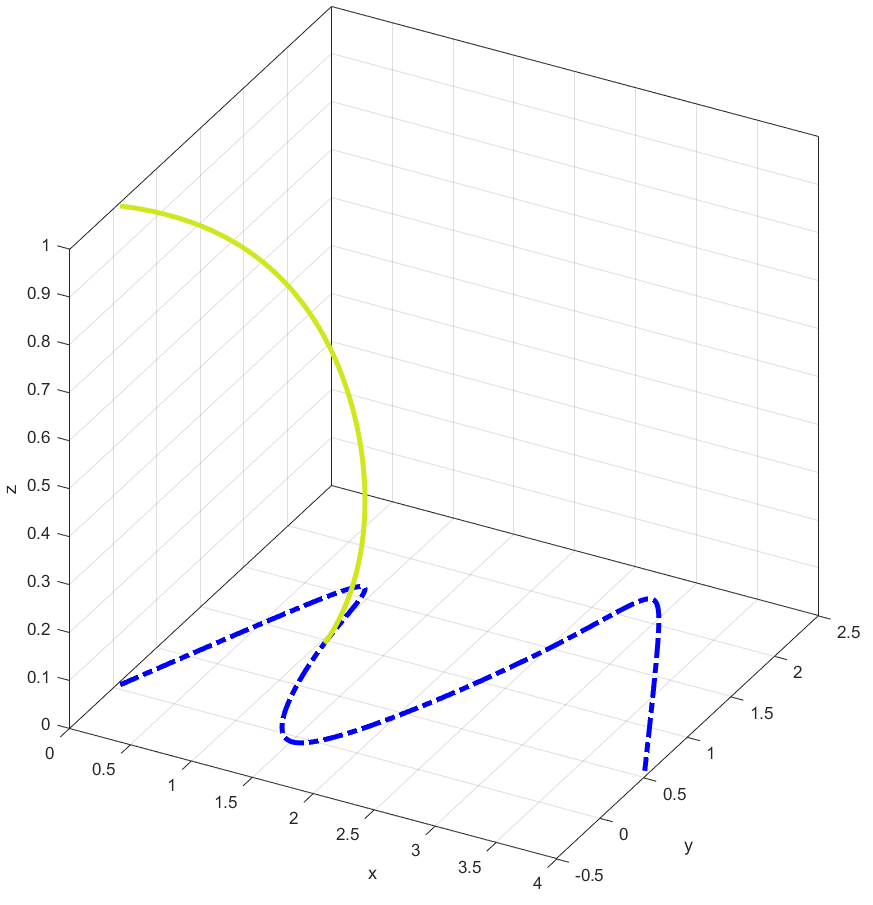}
        \caption{}
    \end{subfigure}
    \hfill
    \begin{subfigure}[b]{0.28\textwidth}
        \centering
        \includegraphics[width=\textwidth]{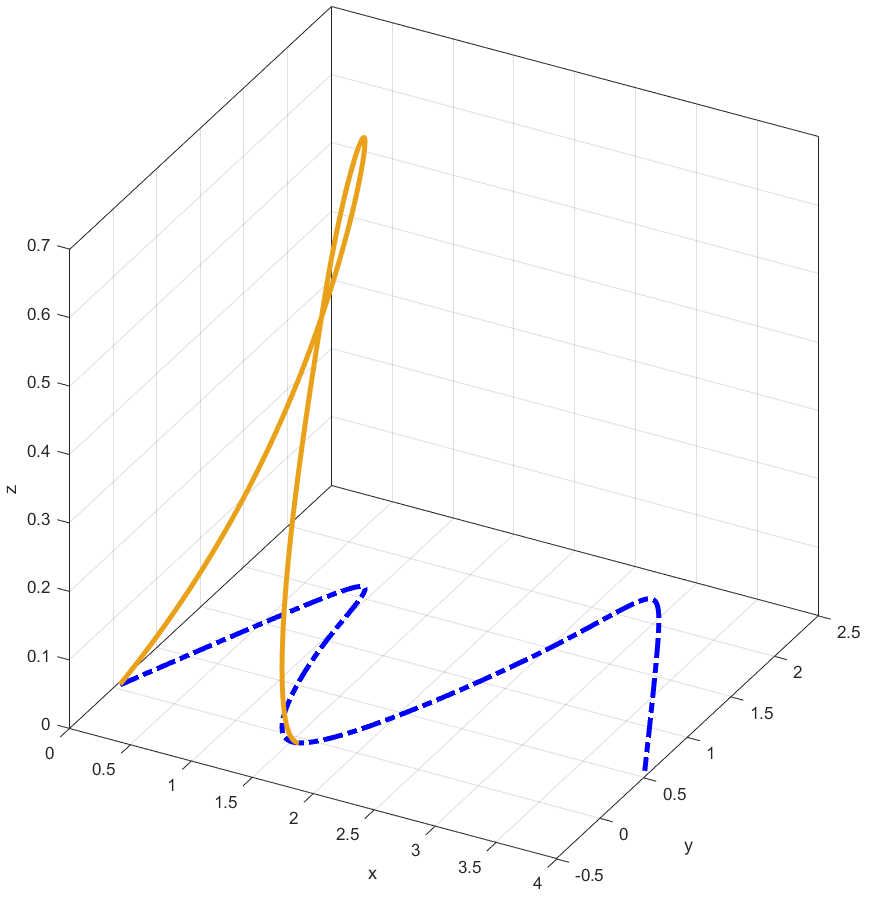}
        \caption{}
    \end{subfigure}
    \hfill
    \begin{subfigure}[b]{0.28\textwidth}
        \centering
        \includegraphics[width=\textwidth]{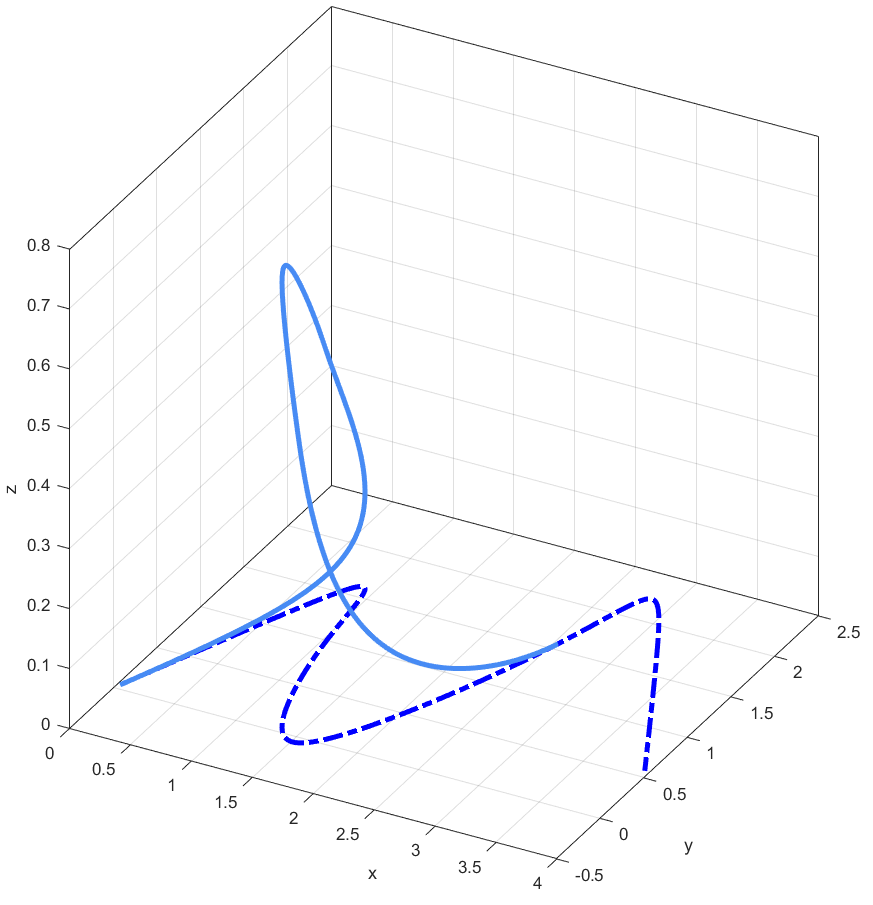}
        \caption{}
    \end{subfigure}

     \begin{subfigure}[b]{0.28\textwidth}
        \centering
        \includegraphics[width=\textwidth]{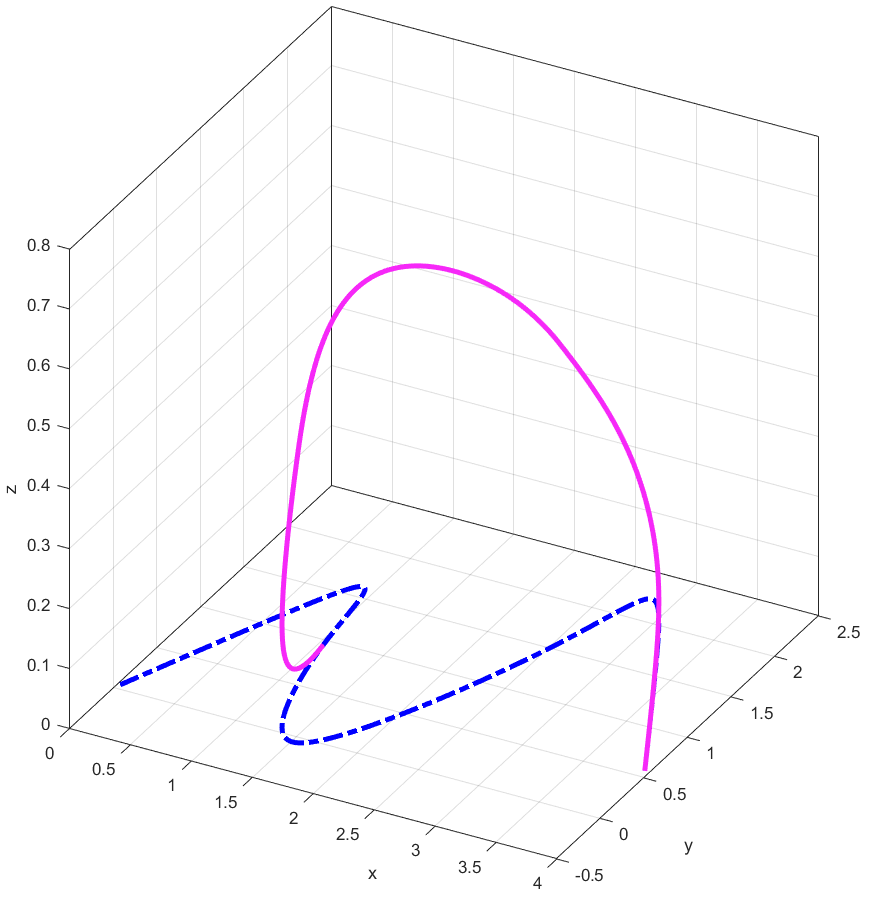}
        \caption{}
    \end{subfigure}
    \hfill
    \begin{subfigure}[b]{0.28\textwidth}
        \centering
        \includegraphics[width=\textwidth]{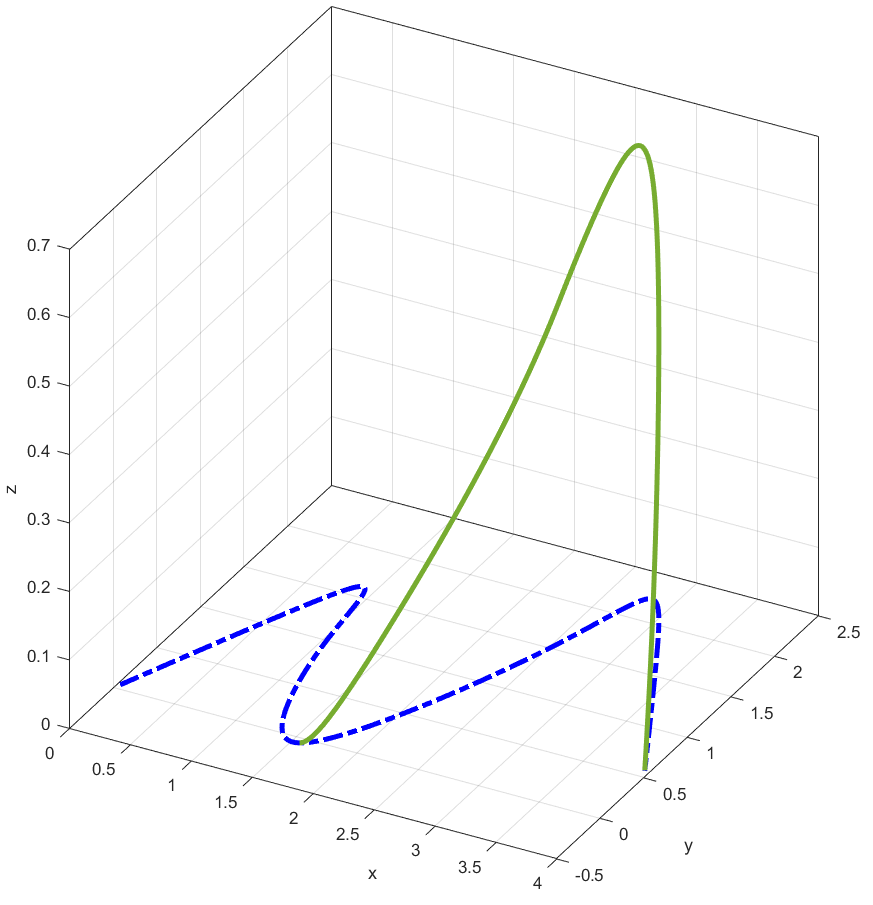}
        \caption{}
    \end{subfigure}
    \hfill
    \begin{subfigure}[b]{0.28\textwidth}
        \centering
        \includegraphics[width=\textwidth]{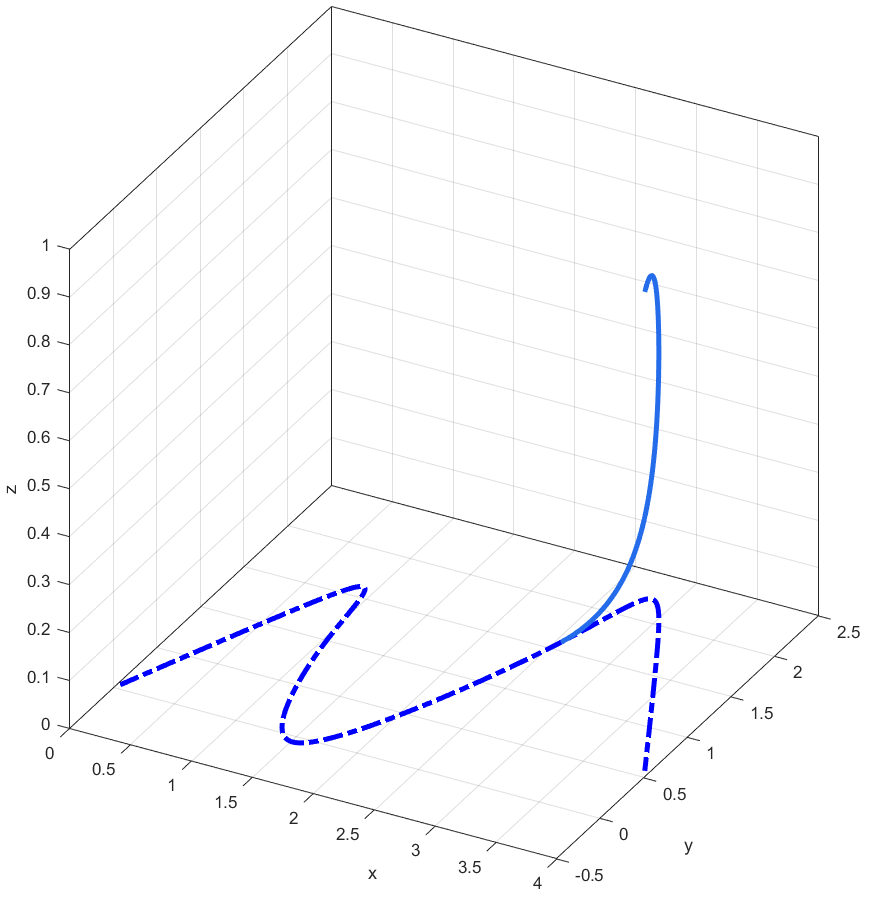}
        \caption{}
    \end{subfigure}

    \caption{From (a) to (f): the quadratic physical rational functions $\cN_{i,2}, i=0,\dots,5$, defining the inverse function in \eqref{inv_cubic}. In dashed line is the input NURBS curve $\phi$.}
    \label{Cubic_NURBS_uniform}
    \end{figure}

    \subsection{Quartic NURBS with multiple inner knot}

    Given the knot vector $U= \left\{0,0,0,0,0,\dfrac{1}{3},\dfrac{2}{3},\dfrac{2}{3},\dfrac{2}{3},\dfrac{2}{3},1,1,1,1,1 \right\}$, the points
    \begin{align*}
    &\mbf{P}_0 = \left(0,0\right),\,  \mbf{P}_1 = \left(\dfrac{1}{2},1\right), \, \mbf{P}_2 = \left(1,\dfrac{1}{2}\right),\, \mbf{P}_3 = \left(\dfrac{1}{4},-1\right), \, \mbf{P}_4 = \left(1,-1\right), \\
    \mbf{P}_5 &= \left(\dfrac{4}{3},-\dfrac{1}{2}\right), \mbf{P}_6 = \left(\dfrac{3}{2},-\dfrac{2}{3}\right), \mbf{P}_7 = \left(\dfrac{5}{4},0\right), \mbf{P}_8 = \left(2,\dfrac{1}{2}\right), \mbf{P}_9 = \left(\dfrac{5}{2},-\dfrac{2}{3}\right),
    \end{align*}
    and weights
    \begin{equation*}
        w_0=1,\, w_1=1, \, w_2 = 2, \, w_3 = 3, \, w_4 = 1, \, w_5=3,\, w_6 = 5, w_7 = 1, \, w_8 = \dfrac{9}{2}, w_9 =1,
    \end{equation*}
    it is possible to define a quartic NURBS curve
    \begin{equation*}
        \phi(u) = \sum_{i=0}^8 w_i \mbf{P}_i N_{i,4}(u).
    \end{equation*}

     \begin{figure}[htbp]
     \centering
  \begin{subfigure}[t]{0.4\textwidth}
    \centering
    \includegraphics[width=\linewidth]{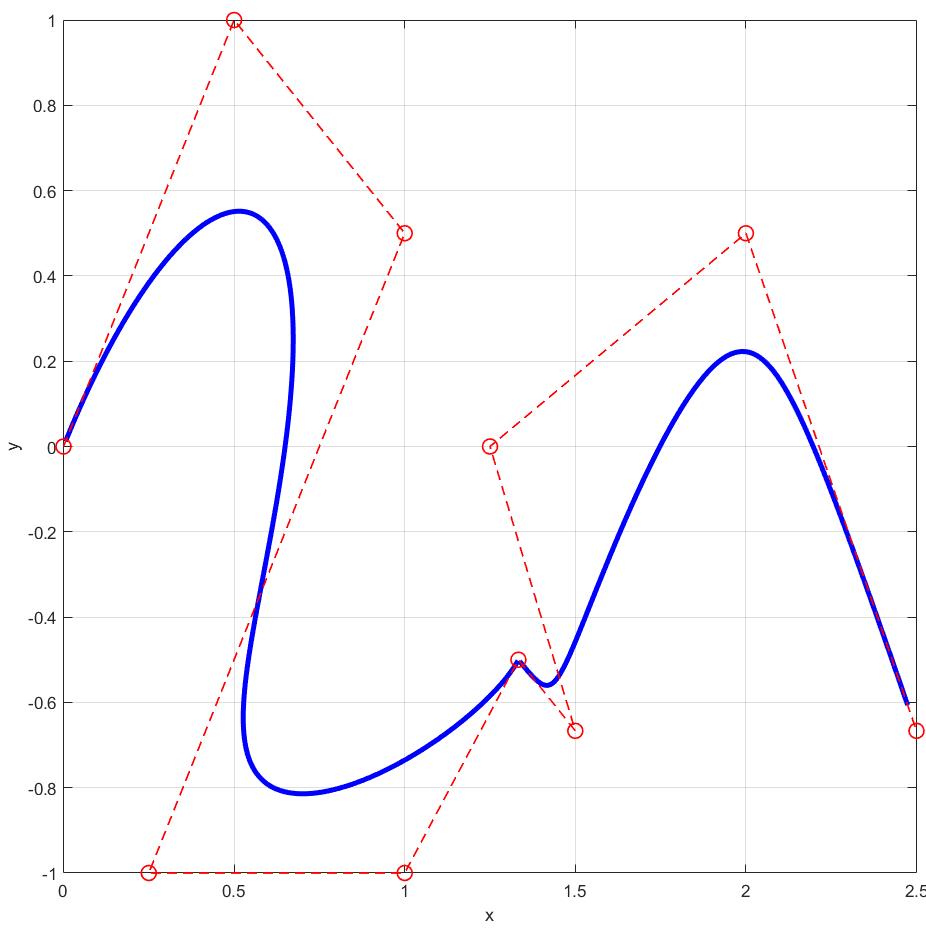}
    \caption{}
  \end{subfigure}
  \begin{subfigure}[t]{0.4\textwidth}
    \centering
    \includegraphics[width=.88\linewidth]{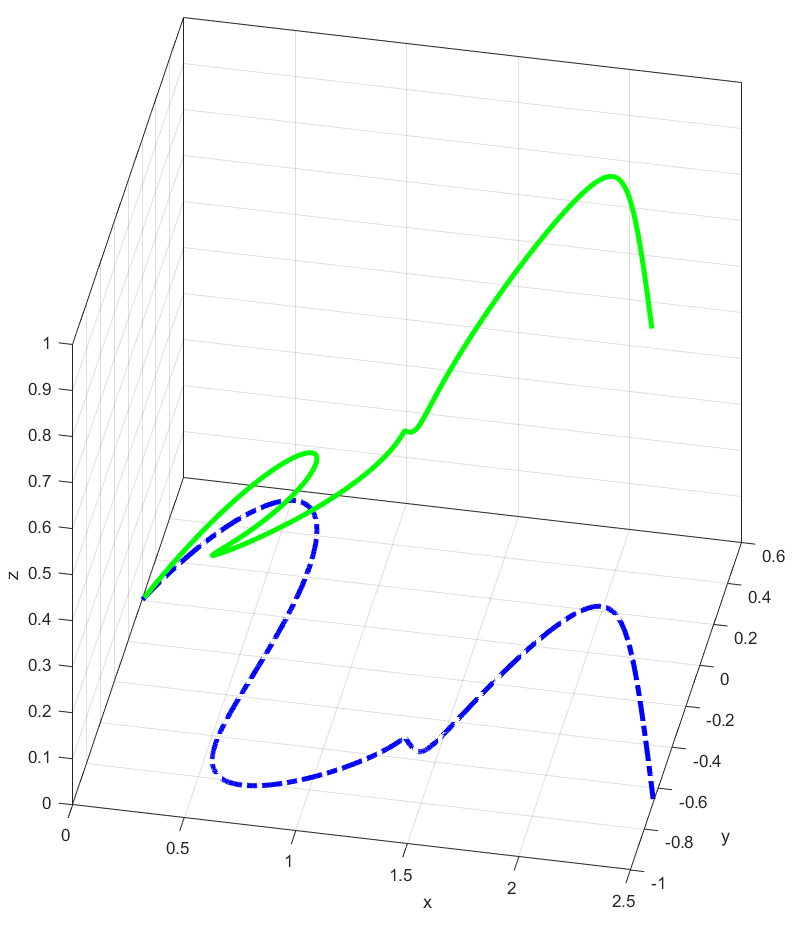}
    \caption{}
  \end{subfigure}
  \caption{NURBS curve of degree $d=4$ with associated control net (a) and the graph of its inverse (b). In dashed line is the input curve $\phi$. Note the $C^0$ continuity in correspondence of $\mbf{P}_5$.}
  \label{quartic_and_inverse}
\end{figure}

  \begin{figure}[h!]
    \centering
    \begin{subfigure}[b]{0.28\textwidth}
        \centering
        \includegraphics[width=\textwidth]{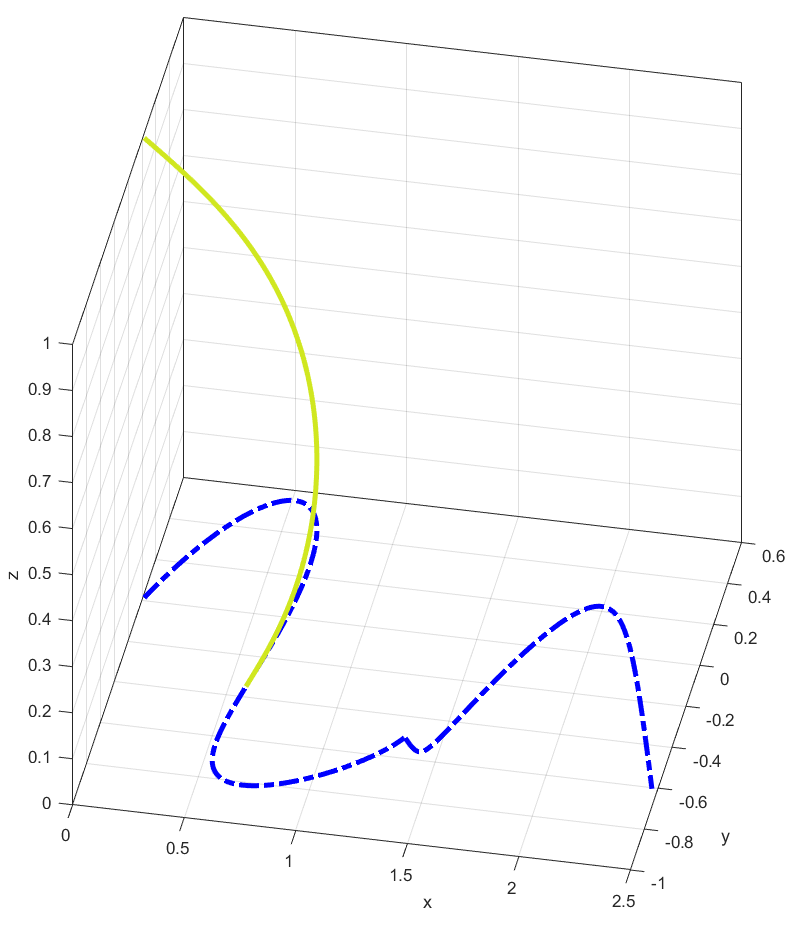}
        \caption{}
    \end{subfigure}
    \hfill
    \begin{subfigure}[b]{0.28\textwidth}
        \centering
        \includegraphics[width=\textwidth]{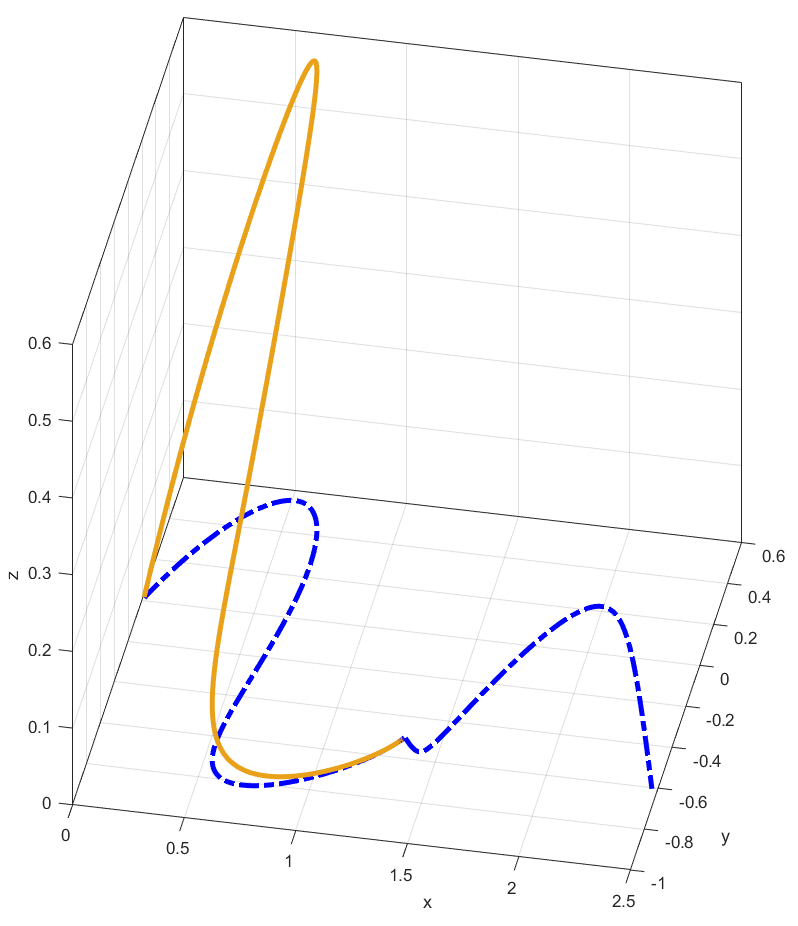}
        \caption{}
    \end{subfigure}
    \hfill
    \begin{subfigure}[b]{0.28\textwidth}
        \centering
        \includegraphics[width=\textwidth]{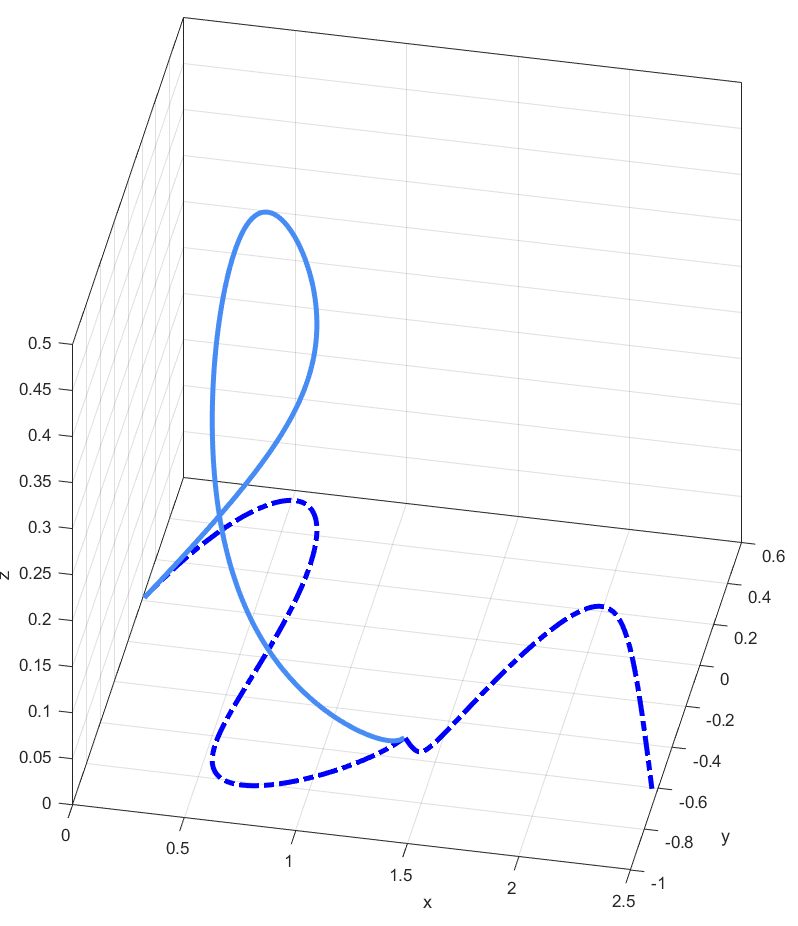}
        \caption{}
    \end{subfigure}

     \begin{subfigure}[b]{0.28\textwidth}
        \centering
        \includegraphics[width=\textwidth]{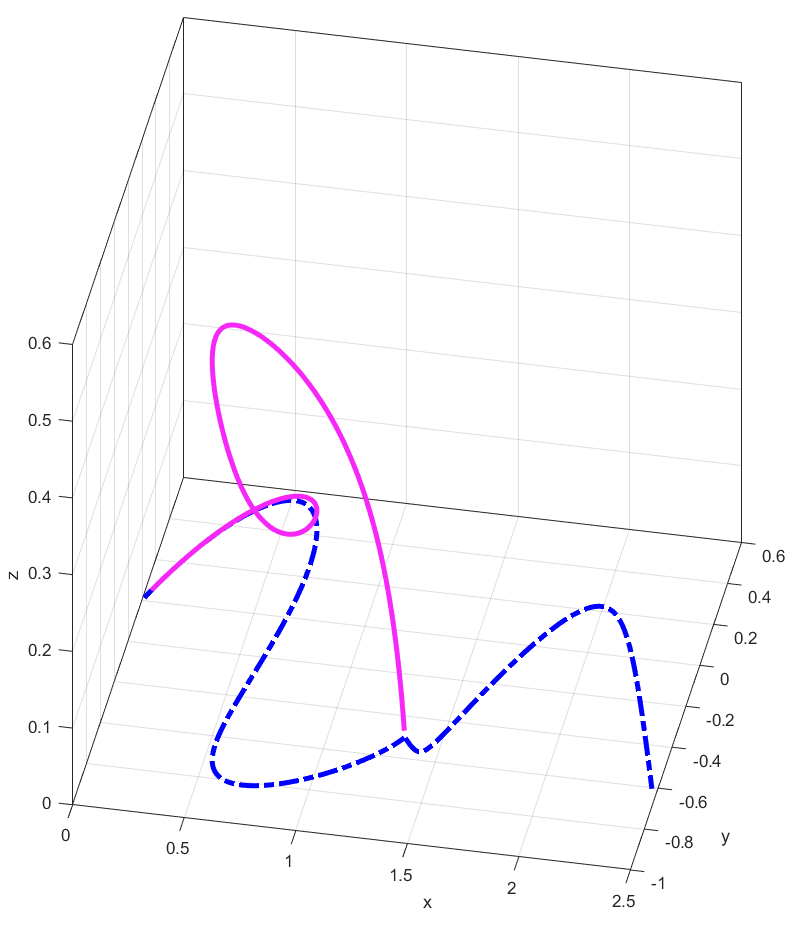}
        \caption{}
    \end{subfigure}
    \hfill
    \begin{subfigure}[b]{0.28\textwidth}
        \centering
        \includegraphics[width=\textwidth]{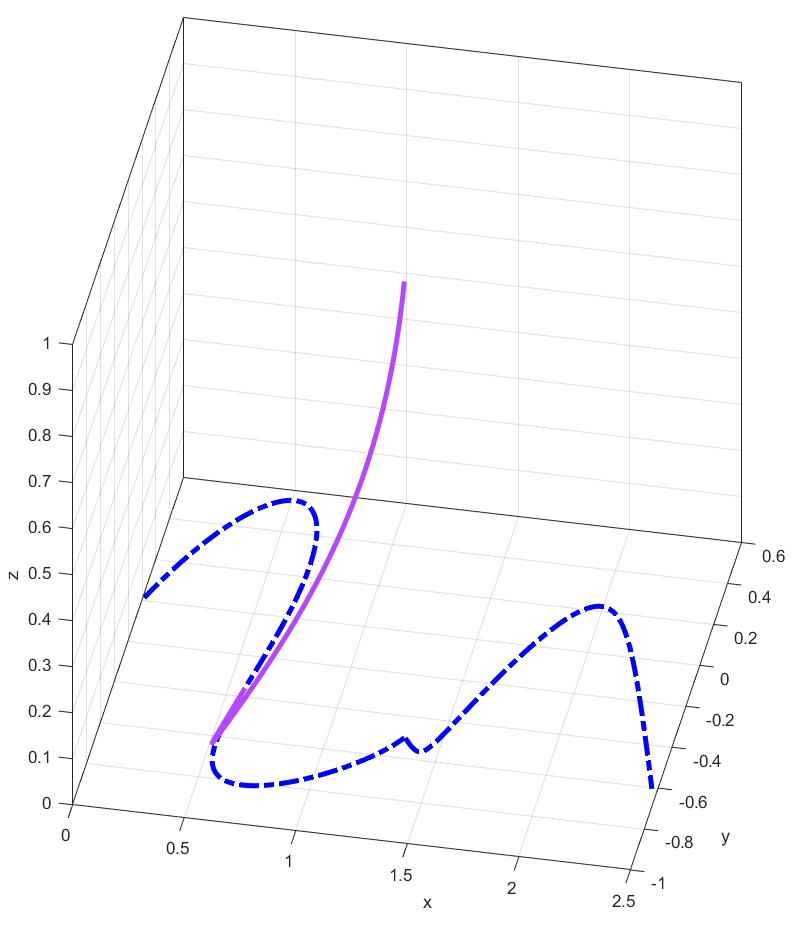}
        \caption{}
    \end{subfigure}
    \hfill
    \begin{subfigure}[b]{0.28\textwidth}
        \centering
        \includegraphics[width=\textwidth]{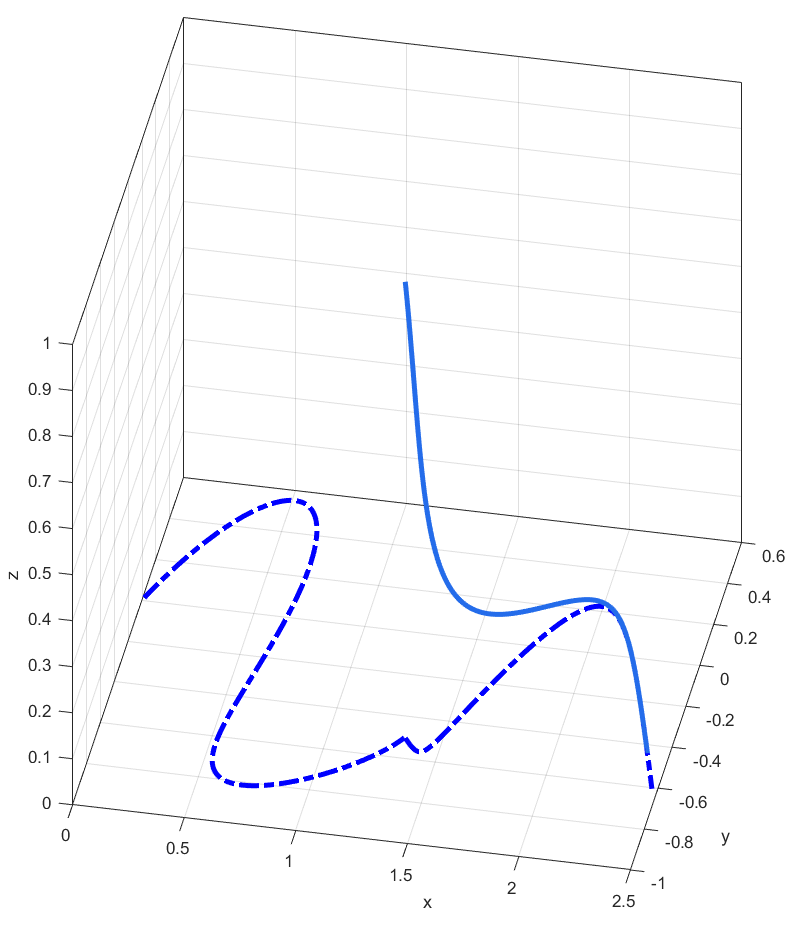}
        \caption{}
    \end{subfigure}

    \begin{subfigure}[b]{0.28\textwidth}
        \centering
        \includegraphics[width=\textwidth]{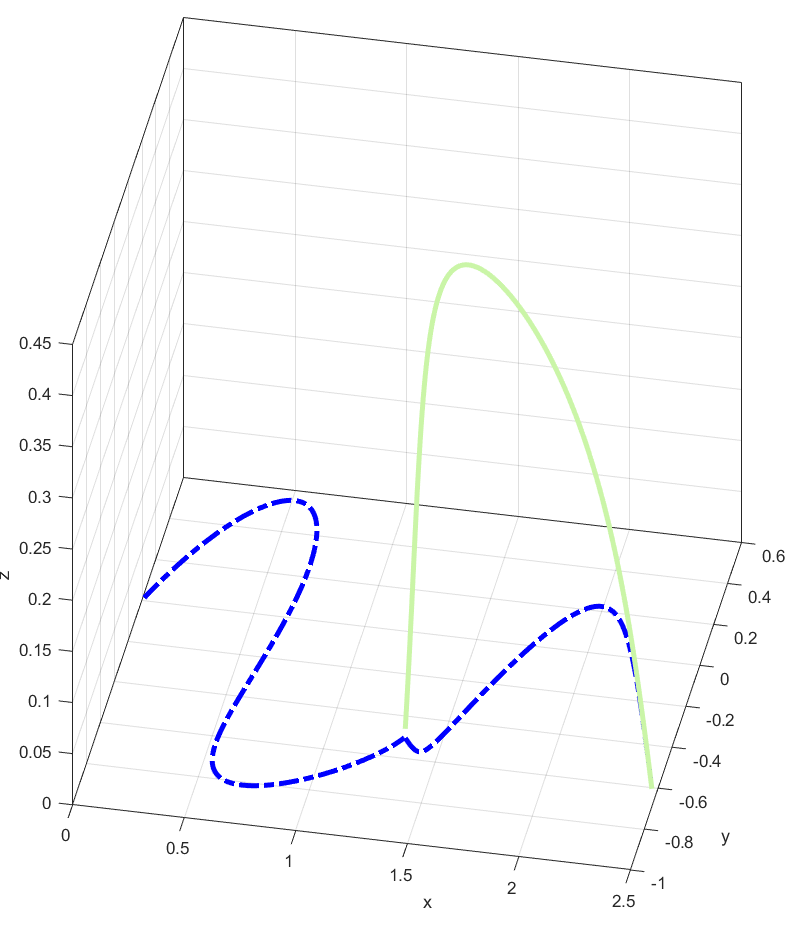}
        \caption{}
    \end{subfigure}
    \hfill
    \begin{subfigure}[b]{0.28\textwidth}
        \centering
        \includegraphics[width=\textwidth]{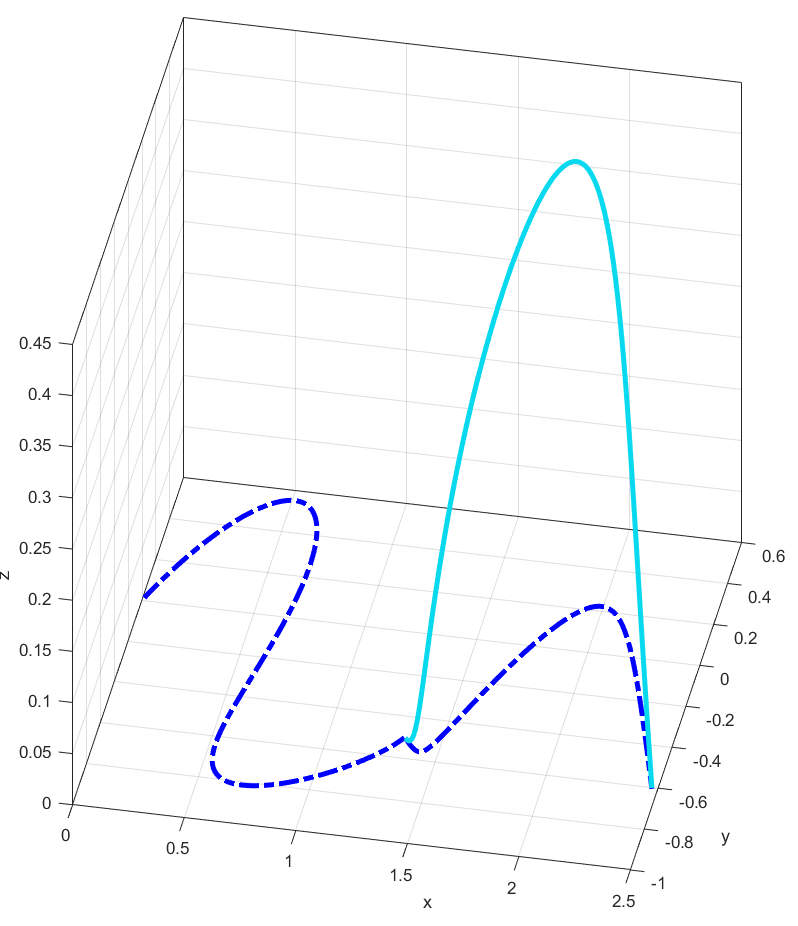}
        \caption{}
    \end{subfigure}
    \hfill
    \begin{subfigure}[b]{0.28\textwidth}
        \centering
        \includegraphics[width=\textwidth]{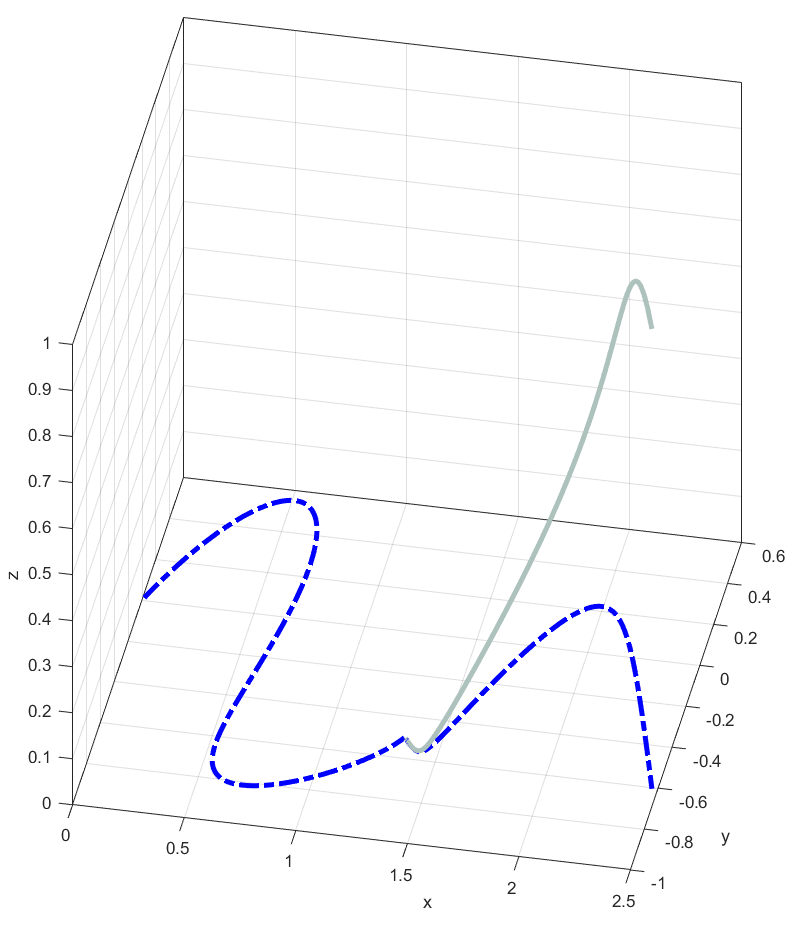}
        \caption{}
    \end{subfigure}

    \caption{From (a) to (i): the set of cubic physical rational splines $\cN_{i,3}, i=0,\dots,8$, defining the inverse function in \eqref{inv_quartic}. In dashed line is the input NURBS curve $\phi$.}
    \label{quartic_NURBS_multiple}
    \end{figure}

    Note that, since $\mu(2/3)=4$, it results $\phi(2/3)=\mbf{P}_5$ as \Cref{quartic_and_inverse}-(a) shows. If we set $p=3$, some of the cubic physical rational splines $\cN_{i,3}$ defined on the vector
    $$\cU=\left\{ \mbf{U}_0, \mbf{U}_0, \mbf{U}_0,\mbf{U}_0,\mbf{U}_1,\mbf{U}_2, \mbf{U}_2,\mbf{U}_2,\mbf{U}_2,\mbf{U}_3,\mbf{U}_3,\mbf{U}_3,\mbf{U}_3\right\}$$
    with
    $$
    \mbf{U}_0 = \mbf{P}_0, \, \mbf{U}_1 = \left(\dfrac{39}{68},-\frac{6}{17}\right),\, \mbf{U}_2 = \mbf{P}_5, \, \mbf{U}_3 = \mbf{P}_9  ,
    $$
    are discontinuous functions at the point $\mbf{U}_2$ (see \Cref{quartic_NURBS_multiple}-(e)-(f)).
    Nevertheless, the proposed construction is completely general for arbitrary multiplicity of the knots and hence we can construct the inverse NURBS map, whose image is depicted in \Cref{quartic_and_inverse}-(b), as

     \begin{equation}\label{inv_quartic}
        \phi^{-1}(x,y) = \sum_{i=0}^8 \xi_{i,3}\, \cN_{i,3}(x,y).
    \end{equation}   

The discontinuities can be avoiding by choosing a degree $p\geq 4$.

    \subsection{Quintic NURBS with self intersection} 
    \label{sec: example - self-intersection}
    In the last example, we consider a quintic curve that presents a self intersection point. If $\phi$ is defined as
    \begin{equation*}
     \phi(u) = \sum_{i=0}^6 w_i \mbf{P}_i N_{i,5}(u),   
    \end{equation*} 
     with control points
     \begin{align*}
    \mbf{P}_0 = \left(0,0\right),\,  \mbf{P}_1 = &\left(1,-\dfrac{1}{3}\right), \, \mbf{P}_2 = \left(3,0\right),\, \mbf{P}_3 = \left(2,\dfrac{1}{3}\right), \, \mbf{P}_4 = \left(1,0 \right), \, \mbf{P}_5 = \left(3,-\dfrac{1}{3}\right), \mbf{P}_6 = \left(4,0\right),
    \end{align*}
    weights
    \begin{equation*}
    w_0=1,\, w_1=3, \, w_2 = 7, \, w_3 = 5, \, w_4 = 7, \, w_5=3,\, w_6 = 1,
    \end{equation*}
    and knot vector $U = \left\{0, 0, 0, 0, 0, 0, \dfrac{1}{2}, 1, 1, 1, 1, 1, 1\right\}$,
    the equation $\phi(u_1)=\phi(u_2)$ gives the parameters that correspond to the self intersection point $S$, as shown in \Cref{quintic_and_inverse}-(a). In this case, if $p=4$, the inverse map 
    \begin{equation}\label{inv_quintic}
        \phi^{-1}(x,y) = \sum_{i=0}^5 \xi_{i,4}\, \cN_{i,4}(x,y),
    \end{equation}
    where $\cN_{i,4}$ are the quartic physical rational functions defined on 
    $$\cU=\left\{ \mbf{U}_0, \mbf{U}_0, \mbf{U}_0,\mbf{U}_0,\mbf{U}_0,\mbf{U}_1, \mbf{U}_2, \mbf{U}_2,\mbf{U}_2,\mbf{U}_2,\mbf{U}_2\right\}, \quad \textnormal{with} \quad \mbf{U}_0 = \mbf{P}_0,\, \mbf{U}_1 =\left(2, \dfrac{2}{23}\right) , \, \mbf{U}_2 = \mbf{P}_6,\vspace{-0.25cm}$$ is still definable, but the value of $\phi^{-1}(S)$ won't be unique and without any a priori information is it not possible to chose among the parameters $u_1$ and $u_2$. \Cref{quintic_and_inverse}-(b) presents the image of the inverse NURBS~map~and~\Cref{Quintic_NURBS_selfint} the graph of the physical B-splines involved in its definition.

       \begin{figure}[htbp]
     \centering
  \begin{subfigure}[t]{0.4\textwidth}
    \centering
    \includegraphics[width=\linewidth]{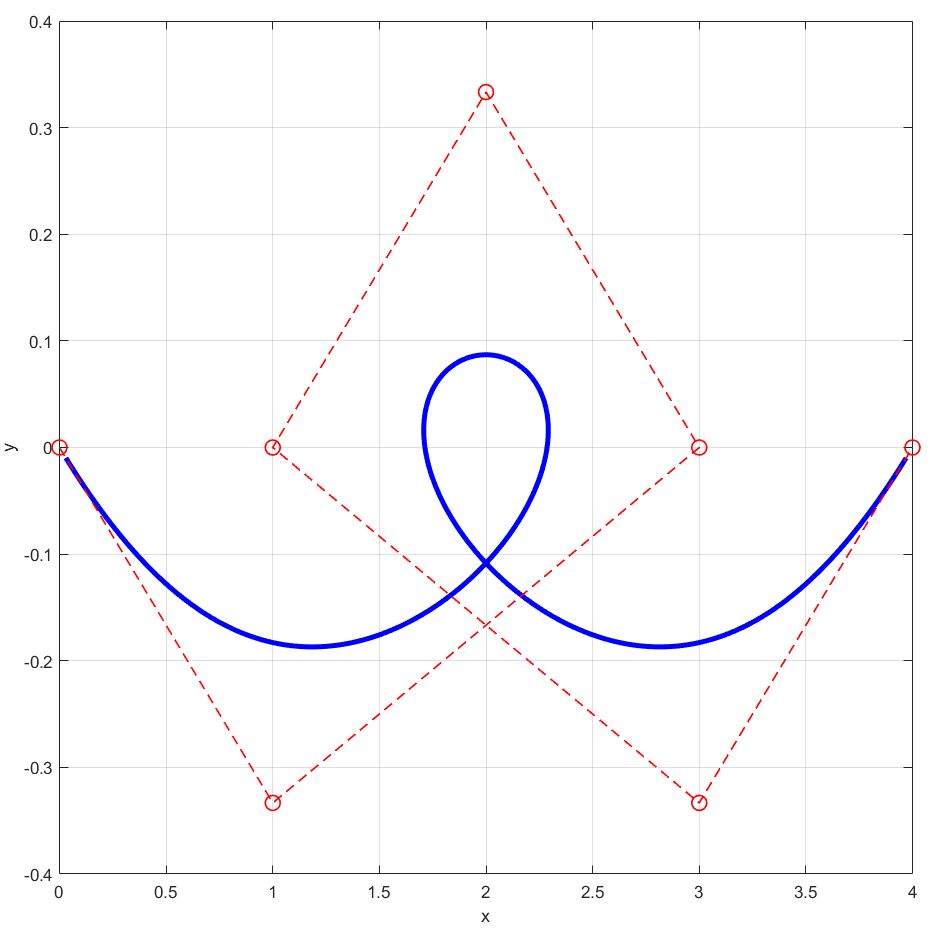}
    \caption{}
  \end{subfigure}
  \begin{subfigure}[t]{0.4\textwidth}
    \centering
    \includegraphics[width=1.17\linewidth]{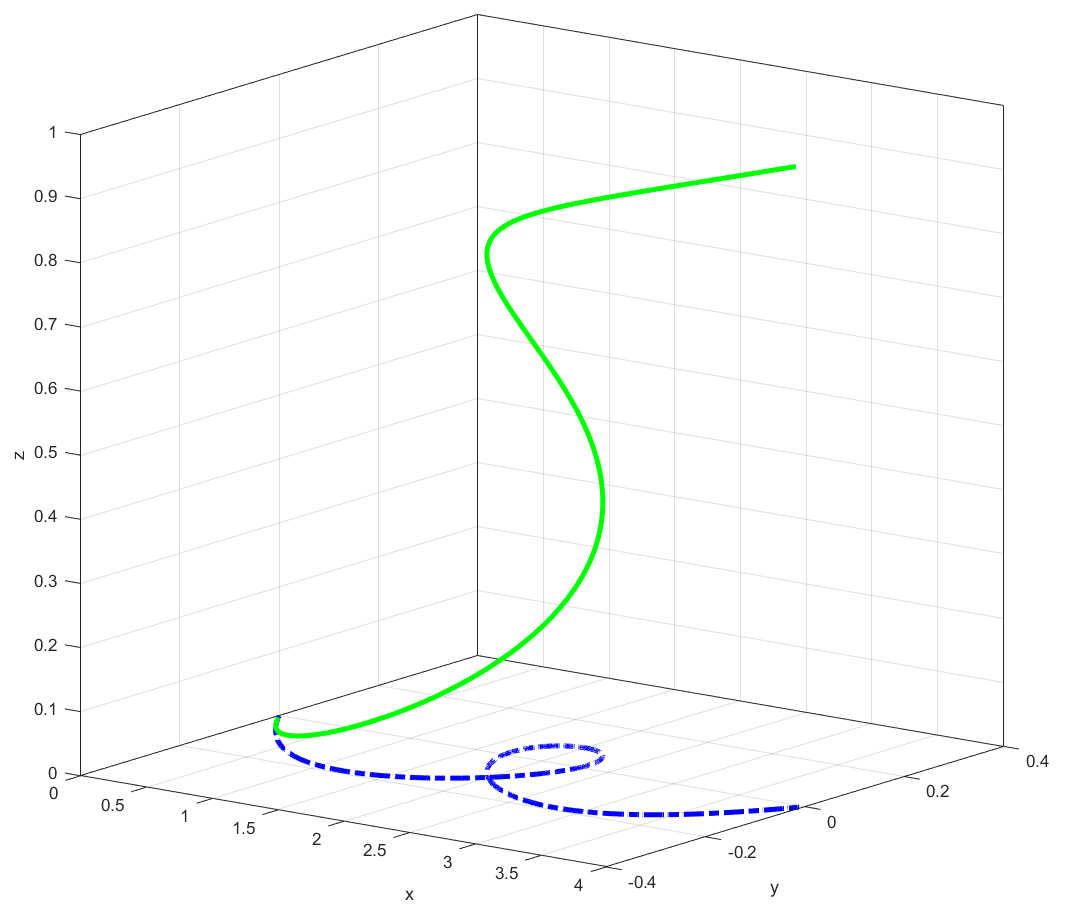}
    \caption{}
  \end{subfigure}
  \caption{Quintic NURBS curve with a self insersection point and its associated control net (a) and the graph of its inverse (b). In dashed line is the input curve $\phi$.}
  \label{quintic_and_inverse}
\end{figure}

 \begin{figure}[htpb]
    \centering
    \begin{subfigure}[b]{0.28\textwidth}
        \centering
        \includegraphics[width=\textwidth]{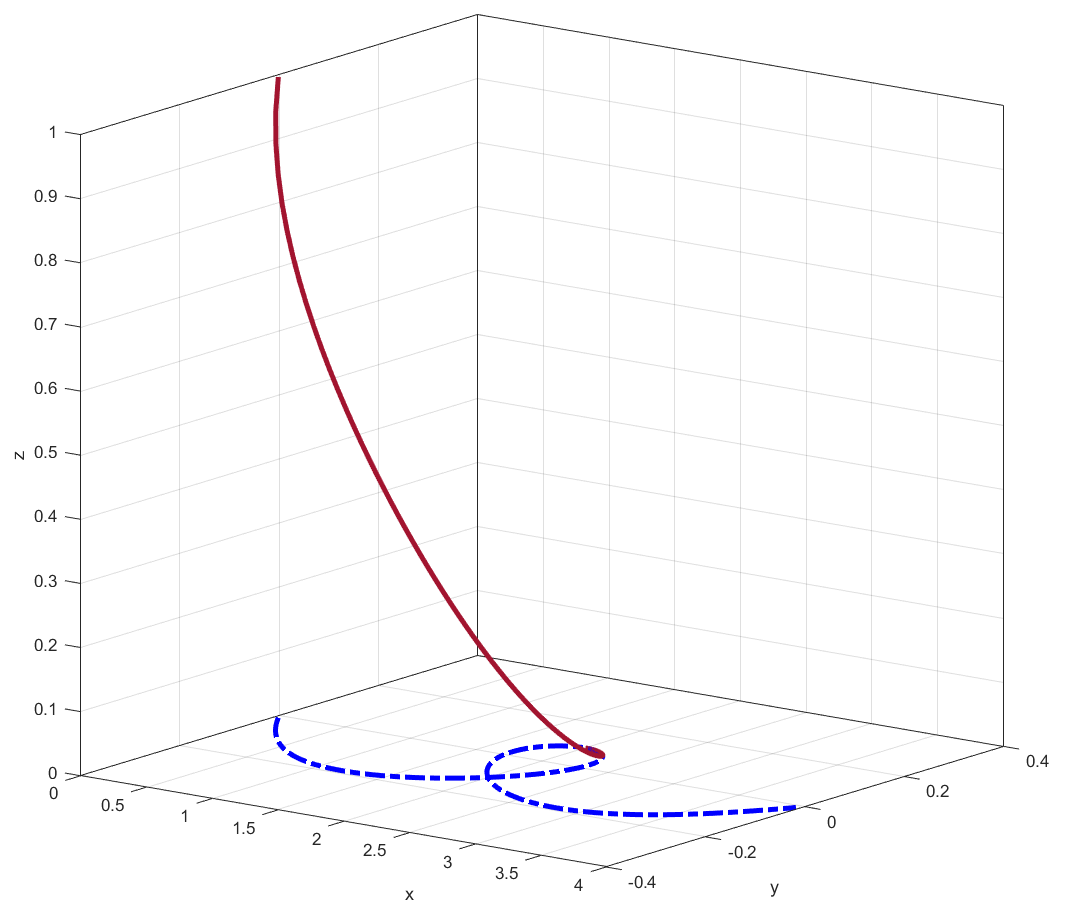}
        \caption{}
    \end{subfigure}
    \hfill
    \begin{subfigure}[b]{0.28\textwidth}
        \centering
        \includegraphics[width=\textwidth]{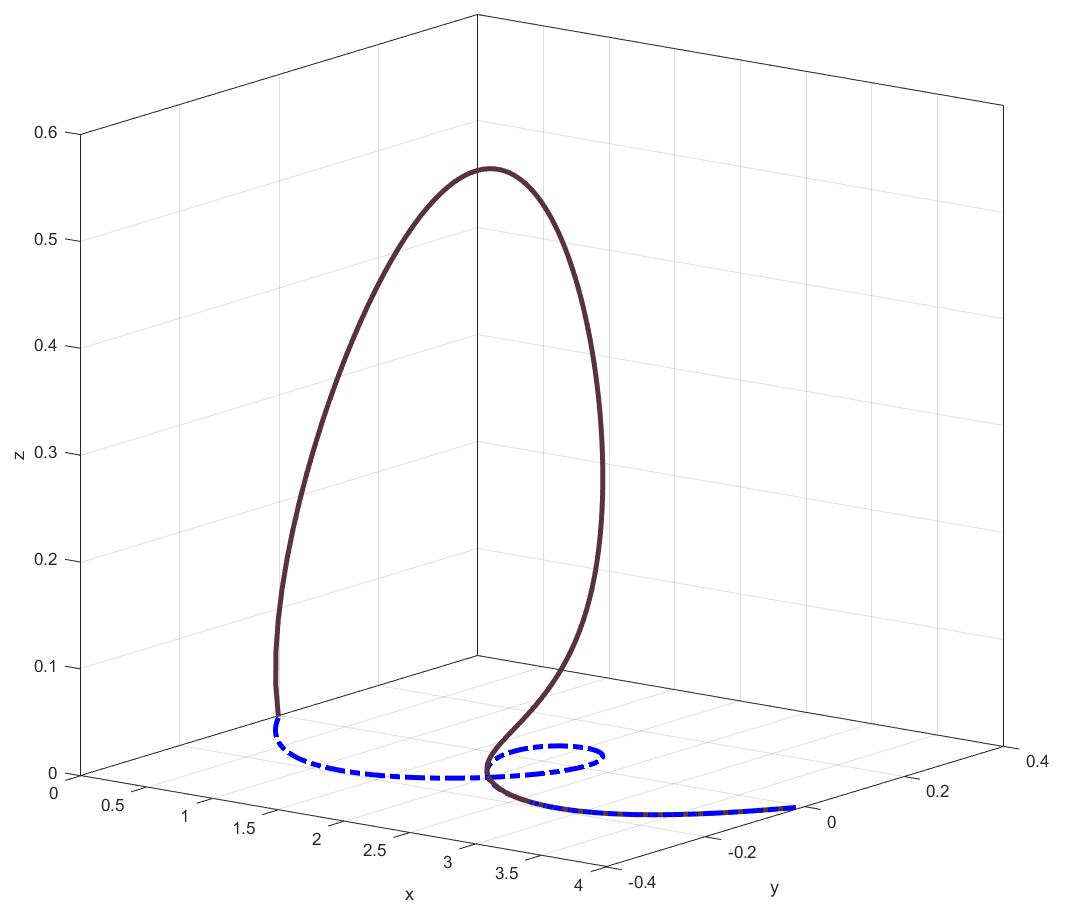}
        \caption{}
    \end{subfigure}
    \hfill
    \begin{subfigure}[b]{0.28\textwidth}
        \centering
        \includegraphics[width=\textwidth]{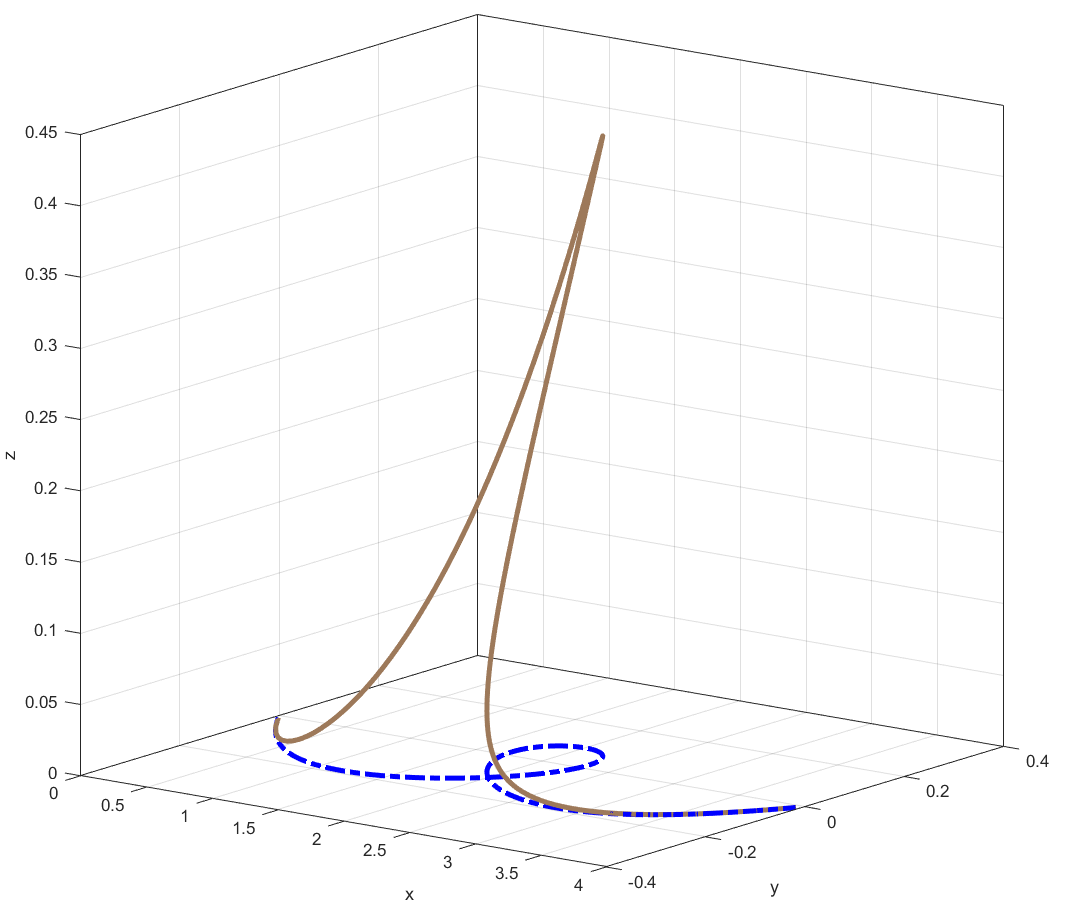}
        \caption{}
    \end{subfigure}

     \begin{subfigure}[b]{0.28\textwidth}
        \centering
        \includegraphics[width=\textwidth]{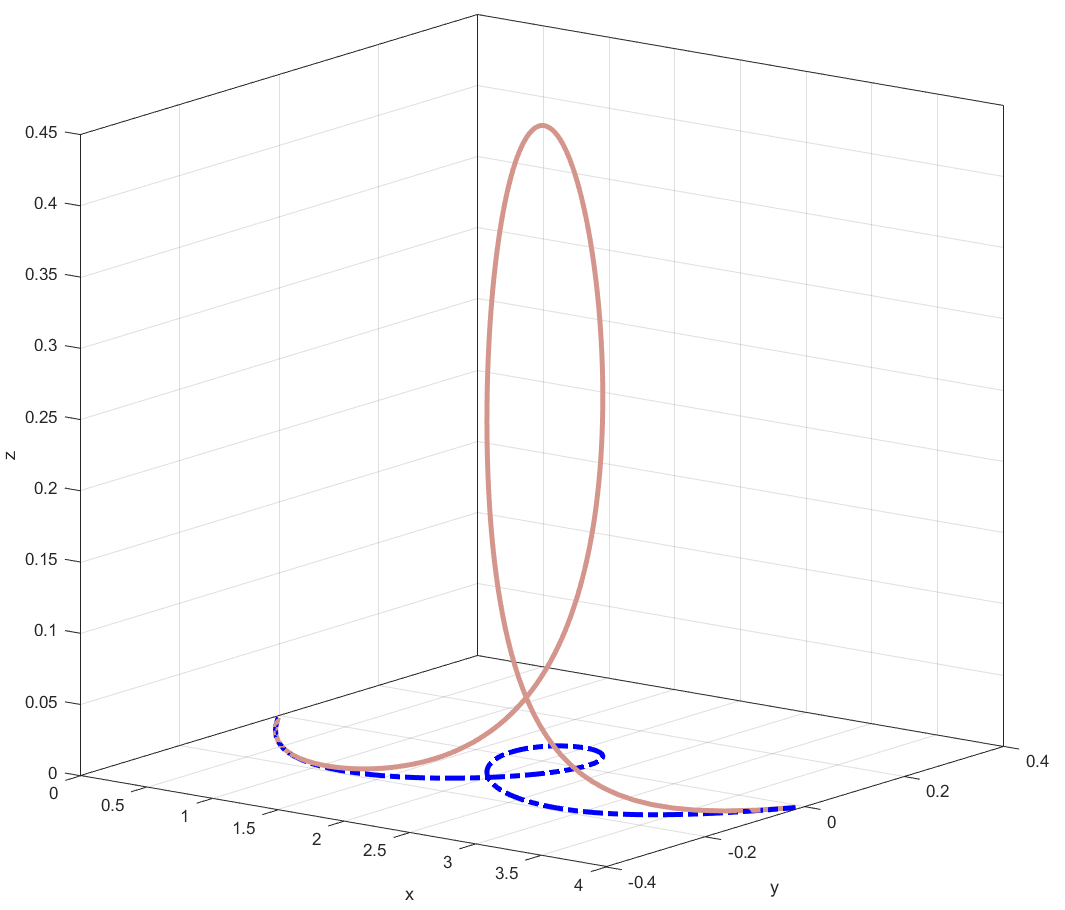}
        \caption{}
    \end{subfigure}
    \hfill
    \begin{subfigure}[b]{0.28\textwidth}
        \centering
        \includegraphics[width=\textwidth]{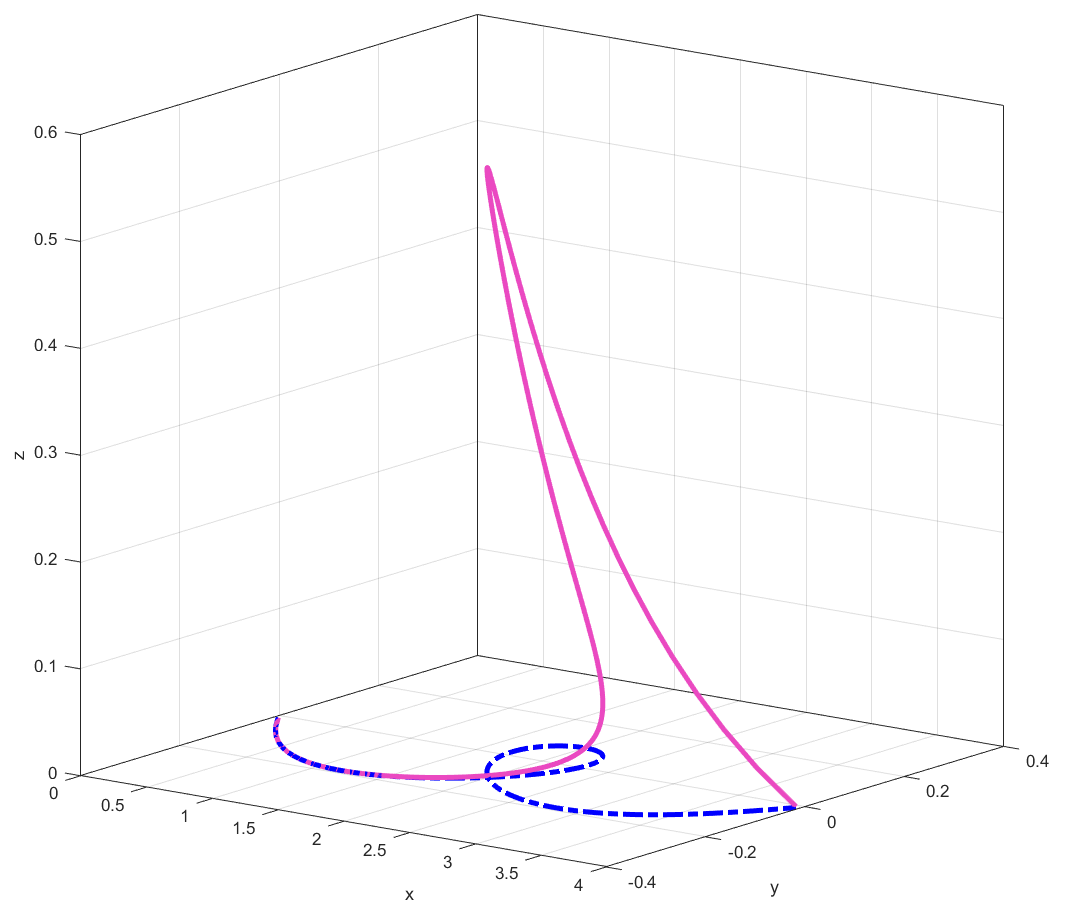}
        \caption{}
    \end{subfigure}
    \hfill
    \begin{subfigure}[b]{0.28\textwidth}
        \centering
        \includegraphics[width=\textwidth]{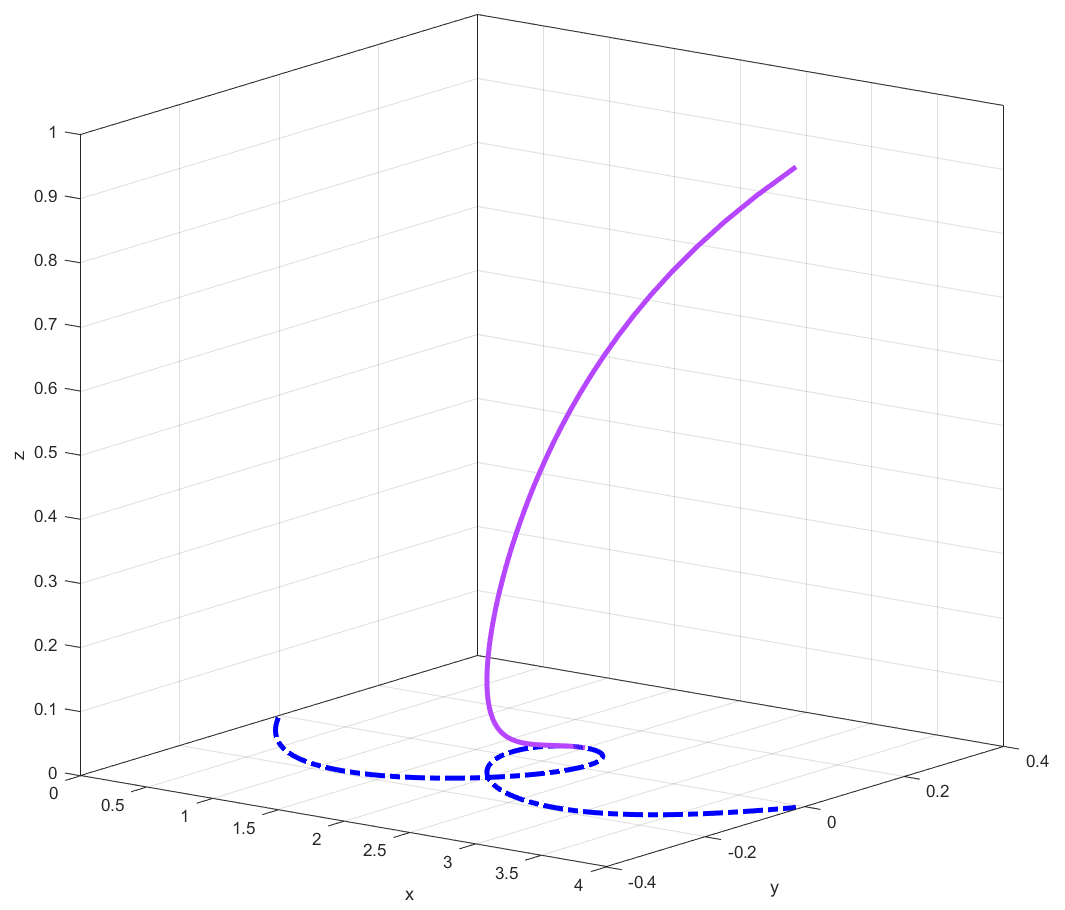}
        \caption{}
    \end{subfigure}

    \caption{From (a) to (f): the quartic physical rational splines $\cN_{i,4}, i=0,\dots,5$, defining the inverse function in \eqref{inv_quintic}. In dashed line is the input NURBS curve $\phi$.}
    \label{Quintic_NURBS_selfint}
    \end{figure}

    \section{Conclusions and Perspectives}
    In this paper we provided an explicit construction of the inverse parametrization of a planar NURBS curve. We first define the inverse by mean of local inverse functions, that are used to define physical rational splines. Therefore we provide a more elegant representation of the inverse as linear combination of such physical rational functions.
    
    This work constitutes a first step toward the development of effective methods for inverting tensor-product NURBS parametrizations of surfaces and volumes, which are required in many problems arising in applied fields such as isogeometric analysis.
    For rational spline parametrization $\phi: [0,1]^2 \xrightarrow{} \R^3$, the map is locally birational under sufficiently generic assumptions, and formulas analogous to those derived in the present paper can be expected.
    By contrast, for patch parametrizations $\phi:[0,1]^n \xrightarrow{} \R^n$, with $n=2,3$, birationality is not a generic property on regions of the domain where the parametrization is rational.
    Indeed, the study of birationality for rational maps in this setting is considerably more challenging.
    Future work will focus on the invertibility of planar NURBS parametrizations $\phi: [0,1]^2 \xrightarrow{} \R^2$ defined by low-bidegree B-spline functions.
    

    \section{Acknowledgements}
    MM is a member of the INdAM research group GNCS,
Italy. 
MM acknowledges the support of the Italian Ministry of University
and Research (MUR) through the PRIN project NOTES (No.
P2022NC97R), funded by the European Union—Next Generation EU. MM is also partially supported by the INdAM GNCS 2025 project ``PASTRAMI - sPline And Solver innovaTions foR Adaptive isogeoMetric analysIs'' (CUP E53C24001950001).
PM is partially supported by the PRIN 2022 grant agreement 40104520.

   \bibliography{ref_invertible_NURBS.bib}
    \end{document}